\newtheorem{myclaim}{Claim}
\spnewtheorem*{sketch}{Sketch of proof}{\itshape}{\rmfamily}
\renewcommand{\qed}{\hfill $\square$}
\newcommand{\remove}[1]{}
\newcommand{\p}{\bfseries{P}}
\newcommand{\algorvr}{\textsc{1P-RVDrawer}\xspace}
\begin{document}

\title{Ortho-polygon Visibility Representations\\ of $3$-connected $1$-plane Graphs\thanks{Research supported in part by: ``Algoritmi e sistemi di
analisi visuale di reti complesse e di grandi dimensioni'' - Ricerca di
Base 2018, Dip. Ingegneria - Univ. Perugia.}}

\author{
Giuseppe Liotta\inst{1},
Fabrizio Montecchiani\inst{1},
Alessandra Tappini\inst{1}
}

\date{}

\institute{
Universit\`a degli Studi di Perugia, Perugia, Italy\\
\email{\{giuseppe.liotta,fabrizio.montecchiani\}@unipg.it}
\email{alessandra.tappini@studenti.unipg.it}
}

\maketitle

\begin{abstract}
An ortho-polygon visibility representation $\Gamma$ of a $1$-plane graph $G$ (OPVR of $G$) is an embedding preserving drawing that maps each vertex of $G$ to a distinct orthogonal polygon and each edge of $G$ to a vertical or horizontal visibility between its end-vertices. The representation $\Gamma$ has vertex complexity  $k$ if every polygon of $\Gamma$ has at most $k$ reflex corners. It is known that $3$-connected $1$-plane graphs admit an OPVR with vertex complexity  at most twelve, while vertex complexity  at least two may be required in some cases. In this paper, we reduce this gap by showing that vertex complexity five is always sufficient, while vertex complexity four may be required in some cases. These results are based on the study of the combinatorial properties of the  B-, T-, and W-configurations in $3$-connected $1$-plane graphs. An implication of the upper bound is the existence of a $\tilde{O}(n^\frac{10}{7})$-time drawing algorithm that computes an OPVR of an $n$-vertex $3$-connected $1$-plane graph on an integer grid of size $O(n) \times O(n)$ and with vertex complexity at most five.
\end{abstract}

\section{Introduction}

Let $G$ be a graph embedded in the plane. An \emph{ortho-polygon visibility representation} of $G$ (\emph{OPVR} of $G$) is an embedding preserving drawing that maps every vertex of $G$ to a distinct orthogonal polygon and every edge of $G$ to a vertical or horizontal visibility between its end-vertices (it is assumed the $\epsilon$-visibility model, where the visibilities can be replaced by strips of non-zero width, see also~\cite{DBLP:journals/algorithmica/GiacomoDELMMW18}). The \emph{vertex complexity} of an OPVR of $G$ is the minimum $k$ such that every polygon has at most $k$ reflex corners. For example, Fig.~\ref{fi:intro-b} shows an OPVR $\Gamma$ of the graph $G$ of Fig.~\ref{fi:intro-a}. All vertices of Fig.~\ref{fi:intro-b} are rectangles except vertex $u$, and thus the vertex complexity of $\Gamma$ is one.

The notion of ortho-polygon visibility representation generalizes the classical concept of \emph{rectangle visibility representation}, that is, in fact, an OPVR with  vertex complexity zero (see, e.g.,~\cite{Biedl2017,DBLP:journals/dam/DeanH97,DBLP:journals/comgeo/HutchinsonSV99,DBLP:conf/cccg/Shermer96,DBLP:conf/stacs/StreinuW03}). In this context, Biedl et al.~\cite{Biedl2017} characterize the $1$-plane graphs that admit a rectangle visibility representation in terms of  forbidden subgraphs, called B-, T-, and W-configurations (see Fig.~\ref{fi:forbidden} for examples and Section~\ref{se:preli} for definitions). We recall that \emph{$1$-plane graphs} are graphs embedded in the plane such that every edge is crossed by at most one other edge, and that the \emph{$1$-planar graphs} are those graphs that admit such an embedding; these graphs are a classical subject of investigation in the constantly growing research field called graph drawing beyond-planarity (refer to~\cite{DBLP:journals/jgaa/BekosKM18,DBLP:journals/corr/abs-1804-07257,DBLP:journals/csr/KobourovLM17}).

Partly motivated by the result of Biedl et al.~\cite{Biedl2017}, Di Giacomo et al.~\cite{DBLP:journals/algorithmica/GiacomoDELMMW18} study the vertex complexity of ortho-polygon visibility representations of $1$-plane graphs. They prove that an OPVR of a $1$-plane graph may require $\Omega(n)$ vertex complexity. However, if the graph is $3$-connected, then vertex complexity twelve is always sufficient, while vertex complexity two is sometimes necessary.

The idea behind the approach of Di Giacomo et al.~\cite{DBLP:journals/algorithmica/GiacomoDELMMW18} to prove a constant upper bound can be shortly described as follows. Let $G$ be a $3$-connected $1$-plane graph. For each crossing in $G$, one of the two edges that form the crossing is suitably chosen and removed from $G$. The removed edges are such that each vertex of $G$ is incident to at most six of them.
After this edge removal, the obtained graph is planar, and hence it admits a bar-visibility representation $\Gamma$ (vertices are represented as horizontal bars and edges as vertical visibilities)~\cite{book}. An OPVR of $G$ is now computed by turning the bars of $\Gamma$ into orthogonal polygons and by inserting horizontal visibilities for the (at most six per vertex) removed edges. The paper shows how to compute a transformation of the bars that adds at most two reflex corners per removed edge, which implies a vertex complexity of at most twelve.
%For each vertex $v$, each removed edge $(u,v)$ is reinserted by adding a vertical ``spoke'' to the horizontal bar representing $v$, a vertical ``spoke'' to horizontal the bar representing $u$, and by representing $(u,v)$ as a horizontal visibility between the two spokes. Since at most six spokes per bar are added, an upper bound of at most twelve for the vertex complexity of the computed OPVR can be proved.
\begin{figure}[t]
	\centering
	\subfigure[]{\includegraphics[width=0.25\columnwidth,page=1]{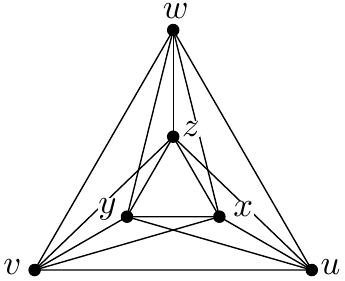}\label{fi:intro-a}}\hfil
	\subfigure[]{\includegraphics[width=0.25\columnwidth,page=2]{figures/example-opvr}\label{fi:intro-b}}\hfil
	\caption{(a) A $1$-plane graph $G$; (b) An OPVR of $G$ with vertex complexity one.}
\label{fi:intro}
\end{figure}
Reducing the gap between the upper bound of twelve and the lower bound of two is left as an open problem in~\cite{DBLP:journals/algorithmica/GiacomoDELMMW18}, and it is the question that motivates our research. We prove the following theorem.

\begin{theorem}\label{th:main}
Let $G$ be a $3$-connected $1$-plane graph with $n$ vertices. There exists an $\tilde{O}(n^\frac{10}{7})$-time algorithm that computes an ortho-polygon visibility representation of $G$ with vertex complexity at most five on an integer grid of size $O(n) \times O(n)$. Also, there exists an infinite family of $3$-connected $1$-plane graphs such that any ortho-polygon visibility representation of a graph in the family has vertex complexity at least four.
\end{theorem}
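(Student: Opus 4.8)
\noindent\emph{Proof plan.}
The two statements are established independently: the bound of five by a drawing algorithm (which we will call \algorvr), the bound of four by an explicit infinite family.

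\smallskip
\noindent\emph{Upper bound.}
The plan is to sharpen the edge-removal approach of Di Giacomo et al.~\cite{DBLP:journals/algorithmica/GiacomoDELMMW18} using the combinatorial properties of B-, T-, and W-configurations in $3$-connected $1$-plane graphs established in the previous sections. The first step is to select, for every crossing of $G$, one of its two crossing edges for removal, obtaining a set $H$ such that (i) $G-H$ is plane and (ii) at every vertex $v$ the edges of $H$ incident to $v$ occur, in the rotation at $v$, grouped into few maximal blocks of consecutive edges (ideally at most two per vertex), and are few in number overall. Establishing that such an $H$ always exists is the heart of the matter: I would cast the choice of $H$ as a constraint-satisfaction problem whose variables are the crossings of $G$ and whose constraints come from shared endpoints and from the B-, T-, W-configurations, and prove that the instance is always feasible; its algorithmic counterpart is a flow/matching computation on an $O(n)$-size network, which accounts for the $\tilde O(n^{10/7})$ running time (via a unit-capacity flow routine). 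The second step is to draw $G-H$ as a bar visibility representation that is compatible with re-inserting the edges of $H$ — for instance by fixing an $st$-orientation of $G-H$ consistent with the rotations the $H$-edges require, and then compacting it onto an $O(n)\times O(n)$ grid by a flow-based routine. The third step is the polygon-transformation lemma: each bar is turned into an orthogonal polygon and the edges of $H$ are inserted as horizontal visibilities, where a maximal block of $t$ consecutive $H$-edges that must attach to one side of a bar is served by a single notch (or staircase pocket) whose reflex-corner cost is far below the naive two-per-edge bound; together with property (ii) and a careful count, this yields vertex complexity at most five. I expect the first step — understanding exactly how B-, T-, and W-configurations can cluster around a common vertex — to be the main obstacle, and it is precisely the place where the discrepancy between the guaranteed value five and the possibly optimal value four originates.

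\smallskip
\noindent\emph{Lower bound.}
The plan is to build a $3$-connected $1$-plane gadget $G_0$ containing a distinguished vertex $v$ that is forced to have at least four reflex corners in every OPVR, and then to obtain an infinite family by substituting copies of $G_0$ into the faces of a host $1$-plane graph (or by a linear-size blow-up) while keeping the graph $3$-connected and $1$-plane. The core is a topological turn-counting argument for the gadget: the rotation at $v$ is fixed by the embedding, the B-/T-/W-configurations incident to $v$ force, for enough of the edges around $v$, which side of $v$'s polygon (top, bottom, left, or right) they may attach to, and reading those forced attachments in cyclic order produces a sequence of facing directions along the boundary of $v$'s polygon that cannot be realized with three or fewer reflex corners. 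Quantitatively, one uses that an orthogonal polygon with $r$ reflex corners has exactly $r+2$ horizontal and $r+2$ vertical sides and satisfies the identity ``number of convex corners minus number of reflex corners equals four'', bounds in terms of $r$ the number of admissible ``direction alternations'' along the boundary, and exhibits in $G_0$ strictly more alternations than are permitted when $r\le 3$. The hard part of the lower bound is engineering $G_0$ so that all the forced attachments are genuine consequences of $1$-planarity and $3$-connectivity — robust to every choice of which crossing edge is drawn horizontally and which vertically, and to every planar embedding consistent with the prescribed one — and then checking that the accounting indeed delivers $4$ rather than merely $3$.
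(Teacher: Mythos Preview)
Your plan follows the edge-removal-to-planarity paradigm of Di~Giacomo et al.\ and hopes to push it from twelve down to five by clustering the removed edges at each vertex into ``few maximal blocks'' served by a single notch each. This is precisely the route the paper abandons. The paper's key new idea is \emph{not} to make $G$ planar: instead it identifies a non-redundant set $F$ of B-, T-, and W-configurations, proves via Hall's theorem (Theorem~\ref{th:matching}, using the counting Lemma~\ref{le:noseparating}) that $F$ admits a $5$-matching into the set of poles, and then \emph{subdivides} one edge fragment per matched configuration so that the resulting graph is still $1$-plane but free of forbidden configurations. Biedl et al.'s RVR algorithm then applies directly; each subdivision vertex is replaced by one spoke (two reflex corners, hence at most ten per pole), and a zig-zag-bend-elimination pass halves that to five. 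Your proposal lacks this matching/Hall step entirely, and the substitute you offer---that the removed crossing edges at a vertex fall into at most two consecutive blocks---is stated as a hope (``ideally at most two per vertex'') with no mechanism for proving it; there is no evident reason the rotations should cooperate in that way. The $\tilde O(n^{10/7})$ running time in the paper also comes from a different place than you suggest: it is obtained by plugging Cohen et al.'s unit-capacity min-cost flow into the existing minimum-vertex-complexity OPVR algorithm of Di~Giacomo et al., not from the combinatorial selection step.

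\paragraph{Lower bound.}
Your plan is a local, single-vertex alternation argument; the paper's is a global pigeonhole. The paper builds, from the nested-triangles graph $S(i)$, a $3$-connected $1$-plane graph with $n_p$ poles that contains $4n_p-8$ forbidden configurations whose interior regions are pairwise disjoint. A short turn-counting shows that every OPVR must place at least one reflex corner of some pole inside each such region; disjointness makes these $4n_p-8$ reflex corners distinct, and dividing by $n_p$ gives vertex complexity at least $4-8/n_p>3$ once $n_p>8$. Your gadget approach could in principle succeed, but it is strictly harder to execute: you must certify, for a specific vertex $v$ and against \emph{every} OPVR, a cyclic pattern of forced attachment directions that exceeds the capacity of a polygon with three reflex corners, and you yourself flag the robustness checks as the difficult part. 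The paper sidesteps all of that by never singling out a vertex and letting averaging do the work.
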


Concerning the upper bound stated in Theorem~\ref{th:main}, the main difference between our approach and the one in~\cite{DBLP:journals/algorithmica/GiacomoDELMMW18} is that we do not aim at removing all crossings so to make $G$ planar. Instead, we define a {\em subset} $F$ of the B-, T-, and W-configurations of $G$ such that $F$ has two fundamental properties: (i) Removing the elements of $F$ removes {\em all} B-, T-, and W-configurations from $G$; and (ii) Each vertex of $G$ can be associated with at most five elements of $F$. We remove $F$ from $G$ and compute a rectangle visibility representation by using the algorithm of Biedl et al.~\cite{Biedl2017}. We then carefully reinsert the removed configurations by ``bending'' each rectangle with at most five reflex corners. We remark that the study of the combinatorial properties of the B-, T-, and W-configurations in $3$-connected $1$-plane graphs is a contribution of independent interest that fits in the rich literature about the properties of $1$-plane graphs (see, e.g.~\cite{DBLP:journals/csr/KobourovLM17}).

Finally, we recall that some authors recently studied OPVRs with fixed vertex complexity. Evans et al.~\cite{DBLP:journals/tcs/EvansLM16} consider OPVRs of directed acyclic graphs where vertices are L-shapes (i.e., with vertex complexity one). OPVRs with L-shapes are also studied in~\cite{DBLP:journals/ipl/LiottaM16}, where it is shown that a particular subclass of $1$-planar graphs admits such a representation. Brandenburg~\cite{DBLP:journals/comgeo/Brandenburg18} studies OPVRs where vertices are T-shapes (i.e., with vertex complexity two) and proves that all $1$-planar graphs admit such a representation if the embedding of the input graph can be changed, and hence the final representation may be not $1$-planar.

The rest of the paper is organized as follows.
Preliminaries are in Section~\ref{se:preli}. The lower bound and the upper bound on the vertex complexity are proved in Section~\ref{se:lower} and in Section~\ref{se:upper}, respectively. Section~\ref{se:conclusions} contains open problems.
For space reasons some proofs have been omitted or sketched, and can be found in appendix (the corresponding statements are marked with [*]).

\section{Preliminaries}\label{se:preli}

\begin{figure}[t]
	\centering
	\subfigure[$b(u,z)$]{\includegraphics[width=0.21\columnwidth,page=1]{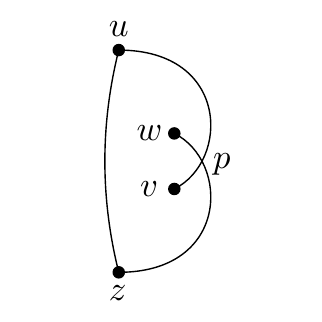}\label{fi:b}}\hfil
	\subfigure[$w(u,z)$]{\includegraphics[width=0.21\columnwidth,page=2]{figures/configurations}\label{fi:w}}\hfil
	\subfigure[$t(u,x,z)$]{\includegraphics[width=0.21\columnwidth,page=3]{figures/configurations}\label{fi:t}}
	\caption{(a) B-configuration; (b) W-configuration; (c) T-configuration.}
\label{fi:forbidden}
\end{figure}

We assume familiarity with basic graph drawing terminology (see, e.g.~\cite{book}). Let $G$ be a $1$-plane graph, let $(u,v)$ be a crossed edge of $G$, and let $p$ be the crossing along $(u,v)$. We call \emph{edge fragments} the two parts of $(u,v)$ from $u$ to $p$ and from $p$ to $v$, and we denote them by $(u,p)$ and $(p,v)$ respectively. Three edges $(u,v)$, $(w,z)$, $(u,z)$ of $G$ form a \emph{B-configuration} with \emph{poles} $u,z$, denoted by $b(u,z)$, if (i) $(u,v)$ and $(w,z)$ cross at a point $p$, and (ii) vertices $v,z$ lie inside the external boundary of $b(u,z)$, i.e., the closed region delimited by the edge fragment $(u,p)$, the edge fragment $(p,z)$, and the edge $(u,z)$; see  Fig.~\ref{fi:b}. Four edges $(u,v)$, $(w,z)$, $(u,x)$, $(y,z)$ of $G$ form a \emph{W-configuration} with \emph{poles} $u,z$, denoted by $w(u,z)$, if (i) $(u,v)$ and $(w,z)$ cross at a point $p$, (ii) $(u,x)$, $(y,z)$ cross at a point $q$, (iii) vertices $v,w,x,y$ lie inside the external boundary of $w(u,z)$, i.e., the closed region delimited by the edge fragments $(u,p)$, $(p,z)$, $(z,q)$, and $(q,u)$; see  Fig.~\ref{fi:w}. Finally, six edges $(u,v)$, $(w,z)$, $(u,y)$, $(x,w')$, $(z,y')$, and $(v',x)$ of $G$ form a \emph{T-configuration} with poles $u,x,z$, denoted by $t(u,x,z)$, if (i) $(u,v)$ and $(w,z)$ cross at a point $p$, (ii) $(u,y)$, $(x,w')$ cross at a point $q$,  (iii) $(z,y')$ and $(v',x)$ cross at a point $r$, (iv) vertices $v,v',w,w',y,y'$ lie inside the external boundary of $t(u,x,z)$, i.e., the closed region delimited by edge fragments $(u,p)$, $(p,z)$, $(u,q)$, $(q,x)$, $(x,r)$, and $(r,z)$; see Fig.~\ref{fi:t}. For example, the graph $G$ of Fig.~\ref{fi:intro-a} contains the T-configuration $t(u,v,w)$ and hence any OPVR of $G$ has at least one reflex corner.  A $1$-plane graph has a rectangle visibility representation (\emph{RVR}) if and only if it contains no B-, no T-, and no W-configurations~\cite{Biedl2017}.

\section{Lower Bound on the Vertex Complexity}\label{se:lower}

Let $S(i)$ be the \emph{nested triangle graph} with $i$ levels, i.e., a maximal plane graph with $3i$ vertices recursively defined as follows~\cite{dlt-pepg-85}. Graph $S(1)$ is a triangle. Denote by $u_1$, $u_2$, and $u_3$ the vertices on the outer face of $S(i-1)$. Graph $S(i)$ is obtained by adding three vertices $v_1$, $v_2$, $v_3$ on the outer face of $S(i-1)$ and edges $(u_1,v_1)$, $(u_2,v_2)$, $(u_3,v_3)$, $(u_1,v_2)$, $(u_2,v_3)$, and $(u_3,v_1)$. Also, we mark as \emph{T-faces} a set of faces of $S(i)$ such that: (1) $S(i)$ has $3i-2$ T-faces, and (2) no two T-faces share an edge. All other faces of $S(i)$ are marked as \emph{NT-faces}. Figure~\ref{fi:lower-bound-a} shows an assignment for $S(3)$ that satisfies these two conditions (the T-faces are gray, while the NT-faces are white). Graph $G(3i)$ is the $3$-connected $1$-plane graph with $3i$ poles obtained from $S(i)$ as follows. For each T-face of $S(i)$, whose boundary contains the three vertices $u,x,z$, we add in its interior a T-configuration $t(u,x,z)$ and three B-configurations $b(u,x)$, $b(u,z)$ and $b(x,z)$ as shown in Fig.~\ref{fi:lower-bound-b}. The resulting graph is $1$-plane and it has $3i$ poles. In particular, we have one B-configuration for each of the $3(3i)-6$ edges of $S(i)$, and we have $3i-2$ T-configurations. However, this graph is not $3$-connected. To achieve $3$-connectivity, for each NT-face of $S(i)$, whose boundary contains the three vertices $u,v,w$, we first add a vertex $c$ in its interior and we then connect it to one vertex that is not a pole for each of $b(u,v)$, $b(u,w)$, and $b(v,w)$; the added edges are crossed exactly once each by an edge on the boundary of the NT-face, as shown in Fig.~\ref{fi:lower-bound-c}. Finally, we add crossing-free edges until all faces are triangles. One can easily verify that the resulting graph is $3$-connected.%; see also Fig.~\ref{fi:g9} in Appendix~\ref{ap:lower}.

\begin{figure}[t]
	\centering
	\subfigure[]{\includegraphics[width=0.24\textwidth,page=1]{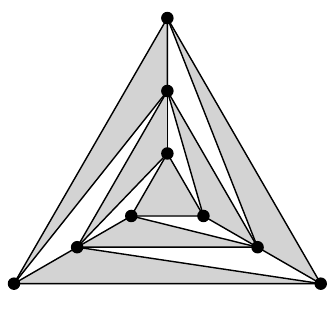} \label{fi:lower-bound-a}}
	\hfil
	\subfigure[]{\includegraphics[width=0.24\textwidth,page=2]{lower-bound} \label{fi:lower-bound-b}}
	\hfil
	\subfigure[]{\includegraphics[width=0.24\textwidth,page=3]{lower-bound} \label{fi:lower-bound-c}}
	\caption{(a) The graph $S(3)$; the T-faces  are gray while the NT-faces are white. (b) Insertion of a T-configuration and three B-configurations in a T-face. (c) Insertion of a vertex in an NT-face to achieve $3$-connectivity.}\label{fi:lower-bound}
\end{figure}

\begin{restatable}{theorem}{lowerbound}\label{th:lowerboud}
For every $n_p > 8$ with $n_p \pmod 3=0$, there exists a $3$-connected $1$-plane graph $G(n_p)$ whose OPVRs all have vertex complexity at least four.
\end{restatable}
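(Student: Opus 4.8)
The plan is to exploit the fact that, by construction, each T-face of $S(i)$ carries inside it one T-configuration and three B-configurations on its three boundary vertices (its poles), and to show that this gadget forces at least four reflex corners on the three poles of that face. Since distinct T-faces occupy distinct faces of $S(i)$, the forced corners attached to different T-faces will be pairwise distinct, and a pigeonhole count over the $3i$ poles of $S(i)$ then produces a pole carrying at least four reflex corners.

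First I would collect the local forcing tools. Building on the characterization of RVR-admissible $1$-plane graphs of Biedl et al.~\cite{Biedl2017} and on the analysis in~\cite{DBLP:journals/algorithmica/GiacomoDELMMW18}, one shows that in every OPVR of $G(n_p)$ each B-configuration $b(s,t)$ forces at least one reflex corner on the polygon of $s$ or of $t$; moreover, since $b(s,t)$ is drawn inside a single, well-defined face of $S(i)$, this reflex corner occurs on the portion of $\partial P_s$ (respectively $\partial P_t$) that faces into that face, and in fact on the sub-arc delimited by the attachment of the $S(i)$-edge $(s,t)$ and by the attachment of the fragment of the crossing edge of $b(s,t)$ incident to $s$ (respectively $t$). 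An analogous statement holds for a T-configuration $t(u,x,z)$: it forces a reflex corner on one of its poles, localized inside the T-face that contains it. The extra vertices and edges inserted in the NT-faces to make $G(n_p)$ be $3$-connected lie outside every T-face and can only create additional reflex corners, so they are harmless for a lower bound.

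The core of the argument is the following claim: for every T-face $f$ of $S(i)$, with poles $u,x,z$, any OPVR of $G(n_p)$ has at least four reflex corners among $P_u$, $P_x$, $P_z$, all lying on the parts of $\partial P_u$, $\partial P_x$, $\partial P_z$ that face into $f$. The three B-configurations $b(u,x)$, $b(u,z)$, $b(x,z)$ inside $f$ already force three reflex corners $c_1,c_2,c_3$ on the poles, and, because the crossing edges of the three B-configurations are pairwise distinct, the six candidate sub-arcs that can host them (two per pole) are pairwise disjoint, so $c_1,c_2,c_3$ are distinct. It then remains to show that the T-configuration $t(u,x,z)$ forces a fourth reflex corner $c_4$ on a pole of $f$ that falls on a sub-arc disjoint from those hosting $c_1,c_2,c_3$ (being separated from the $S(i)$-edges $(u,x)$, $(u,z)$, $(x,z)$ by the crossing edges involved in the gadget), hence $c_4 \notin \{c_1,c_2,c_3\}$. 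Granting this claim and observing that, for two distinct T-faces $f \ne f'$, the boundary arcs involved are disjoint --- the boundary of each pole is split by its incident $S(i)$-edges into arcs, one per incident face of $S(i)$ --- the reflex corners charged to distinct T-faces are pairwise distinct.

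Putting the pieces together: any OPVR of $G(n_p)$ has at least $4(3i-2) = 12i-8$ reflex corners distributed over the $3i$ poles of $S(i)$. For $n_p = 3i > 8$, i.e.\ $i \ge 3$, we have $0 < 8/(3i) < 1$, hence $\lceil (12i-8)/(3i) \rceil = \lceil 4 - 8/(3i) \rceil = 4$, and by pigeonhole some pole carries at least four reflex corners, so the vertex complexity is at least four; since $G(n_p)$ is $3$-connected and $1$-plane it does admit an OPVR by~\cite{DBLP:journals/algorithmica/GiacomoDELMMW18}, so the family is non-vacuous. The main obstacle is the core claim, and specifically the production of the fourth reflex corner in each T-face together with the disjointness of the four hosting sub-arcs: proving ``four'' instead of the easy ``three'' requires arguing that the T-configuration genuinely contributes a reflex corner, on a pole, beyond those already charged to the three B-configurations, which rests on a careful case analysis of the OPVRs of the gadget placed inside a T-face.
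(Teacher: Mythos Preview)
Your overall strategy---count forced reflex corners on poles and pigeonhole over the $n_p$ poles---is exactly the paper's, and your final arithmetic $4n_p-8$ over $n_p$ poles giving complexity $\ge 4$ for $n_p>8$ is identical. The difference is in how the ``core claim'' is handled, and here you are making your life harder than necessary.

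You localize the reflex corner forced by a forbidden configuration to a \emph{sub-arc of the pole's boundary} determined by adjacent attachments, and then try to argue that the four sub-arcs arising in a T-face are pairwise disjoint. You correctly flag this as the obstacle and anticipate a case analysis. The paper avoids this entirely by using a coarser but sufficient localization: a turn-counting argument along the external boundary of a forbidden configuration $f$ (left turns at crossings and attachment points, right turns at polygon corners, with right turns $=$ left turns $+4$ for a simple orthogonal cycle) shows that some pole of $f$ must have a reflex corner lying \emph{inside the interior region of $f$}. The crucial observation is then purely combinatorial: in the construction of $G(n_p)$, the one T-configuration and three B-configurations placed inside each T-face are drawn so that their interior regions are pairwise disjoint (and of course disjoint from those in any other T-face). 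Hence the $4n_p-8$ forced reflex corners are automatically distinct---no sub-arc bookkeeping, no case analysis.

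So your proof is not wrong, but it is incomplete at precisely the point you identify, and the missing piece is not a ``careful case analysis'' but the simpler observation above: localize to the interior region of each forbidden configuration rather than to boundary sub-arcs, and use that the construction makes those $4n_p-8$ regions pairwise disjoint.
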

\begin{proof}
Consider the graph $G(n_p)$ described above, with $n_p>8$.
Let $\Gamma$ be any OPVR of $G$. We first prove that, for each forbidden configuration $f$ of $G(n_p)$, $\Gamma$ contains at least one reflex corner on one of the poles of $f$ and that this reflex corner lies inside the \emph{interior region} of $f$, i.e., inside the bounded region of $\Gamma$ delimited by the external boundary of $f$. We follow an argument similar to the one in~\cite{Biedl2017}. Suppose first that $f$ is a B-configuration $b(u,z)$. Consider a closed walk in clockwise direction along the external boundary of $b(u,z)$ in $\Gamma$. The crossing point is a left turn, as well as any attaching point of a visibility to a polygon, while the corners of the polygons are right turns. Since the external boundary is an orthogonal polygon, the number of right turns equals the number of left turns plus four. Let $a$ be the number of attaching points of a visibility to a polygon, let $k$ be the number of crossings, and let $r$ be the number of corners. We have that $r=k+a+4$. Since $k=1$ and $a \ge 4$, we have that $r \ge 9$, which implies that at least one of the two polygons representing $u$ and $z$, say $u$, has at least five corners. As a consequence, there exists at least one reflex corner along the polygon of $u$ that lies inside the interior region of $b(u,z)$. Similarly, for a T-configuration $t(u,x,z)$, we have that $k=3$ and $a \ge 6$, which implies that $r \ge 13$, and thus at least one of its poles has a reflex inside the interior region of $t(u,x,z)$. Since $G(n_p)$ contains $4n_p-8$ forbidden configurations, and since any pair of forbidden configurations of $G(n_p)$ is such that the intersection of their two interior regions is empty, it follows that $\Gamma$ contains at least $4n_p-8$ distinct reflex corners distributed among its $n_p$ poles. Let $c$ be the maximum number of reflex corners on a polygon of $G(n_p)$, it follows that $c\, n_p \ge 4n_p-8$, which implies $c \ge 4-8/n_p>3$ because $n_p>8$.
\qed\end{proof}

\section{Upper Bound on the Vertex Complexity}\label{se:upper}

%We prove that a $3$-connected $1$-plane graph $G$ has an OPVR with vertex complexity at most five.
In this section, we first show the existence of an assignment between the set of forbidden configurations in $G$ and their poles such that each pole is assigned at most five forbidden configurations (Section~\ref{sse:matching}). Then we make use of this assignment and of a suitable modification of the algorithm in~\cite{Biedl2017} to obtain an OPVR of $G$ with vertex complexity at most ten (Section~\ref{sse:drawing-algo}). Finally, we apply a post-processing step to reduce the vertex complexity to five (Section~\ref{sse:drawing-algo-refined}).

\subsection{Forbidden Configurations in $3$-connected $1$-plane Graphs}\label{sse:matching}

Two forbidden configurations of a $3$-connected $1$-plane graph $G$ are called \emph{independent} if they share no crossing (although they may share poles), while they are called \emph{dependent} otherwise. The next lemma proves some basic properties of the independent forbidden configurations in $G$.

\begin{restatable}{lemma}{good}\label{le:good}\emph{\textbf{[*]}}
Let $G$ be a $3$-connected $1$-plane graph and let $G' \subseteq G$. The following properties hold.
\begin{inparaenum}[\p 1:]
\item\label{pr:nothreesamepoles} There are no three independent forbidden configurations of $G'$ that share a pair of poles.
\item\label{pr:w-single} If $G'$ contains a W-configuration $w(u,z)$, all vertices of $G'$, except $u$ and $z$, are inside $w(u,z)$.
\item\label{pr:w-poles} If $G'$ contains a W-configuration $w(u,z)$, no other forbidden configuration of $G'$ that is independent of $w(u,z)$ has $u,z$ as poles.
\item\label{pr:b-samepoles} There are no two B-configurations of $G'$ sharing their two poles. The only exception is when two B-configurations form a W-configuration.
\item\label{pr:t-onecrossing} Two T-configurations of $G'$ that are dependent share exactly one pair of crossing edges.
\end{inparaenum}
\end{restatable}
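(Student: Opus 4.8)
The plan is to derive all five properties from two elementary facts together with the $3$-connectivity of $G$. The first is a $1$-planarity fact: an edge that is already crossed inside a configuration cannot be crossed again, so the external boundary of any B-, T-, or W-configuration (a closed curve formed by fragments of already-crossed edges, together with the single, possibly uncrossed, edge $(u,z)$ in the case of a B-configuration) cannot be traversed by any other edge of $G$ except at its poles; in particular it is a simple closed curve whose only vertices are the poles. The second is connectivity: $G$, being $3$-connected, stays connected after the removal of any two vertices. Combining the two, whenever the external boundary of a configuration, possibly cut by the external boundary of another one, encloses a bounded region (a \emph{pocket}) that contains a vertex but whose only access to the rest of the drawing is through at most two vertices, those two ``access'' vertices form a separating pair, a contradiction. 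Property P2 is the first instance: the external boundary $C$ of $w(u,z)$ is a simple closed curve with only $u,z$ on it, so an edge joining a vertex strictly inside $C$ to one strictly outside would have to pass through $u$ or $z$; since $v$ is strictly inside $C$ and $G-\{u,z\}$ is connected, every vertex of $G$ other than $u,z$, hence every vertex of $G'\subseteq G$ other than $u,z$, lies inside $C$.

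I would then prove P3 and P4 by the same pocket argument. For P3, let $f$ be a configuration of $G'$ with poles $u,z$ independent of $w(u,z)$. Independence means $f$ and $w(u,z)$ share no crossing, hence no edge, so their external boundaries meet only at $u,z$; using P2 (and a short case analysis on the type of $f$, which in the T-case also uses that the third pole of $f$ would otherwise be expelled from $w(u,z)$), the region of $f$ is nested inside that of $w(u,z)$. Deleting the closed region of $f$ from the region of $w(u,z)$ leaves two lunes, one containing the crossing $p$ together with $v$ and $w$, the other containing $q$ together with $x$ and $y$; each lune is bounded by impenetrable arcs and exits only through $u$ and $z$, so $v$ and $x$ lie in different components of $G-\{u,z\}$, contradicting $3$-connectivity. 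For P4, two B-configurations with poles $u,z$ use the same edge $(u,z)$: if their regions lie on opposite sides of $(u,z)$, their four crossed edges form a W-configuration $w(u,z)$ (the stated exception); if they lie on the same side, the two teardrops are nested, and the region between them is a lune exiting only through $u$ and $z$, containing a non-pole vertex of the outer configuration while a non-pole vertex of the inner one lies outside it — so those two vertices are separated by $\{u,z\}$, again a contradiction.

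P1 then follows. Suppose three pairwise-independent forbidden configurations share poles $u,z$. If one is a W-configuration $w(u,z)$, either of the other two contradicts P3. If two are B-configurations, by P4 they form a W-configuration $w(u,z)$ whose crossings are theirs, and the third, being independent of both, is independent of $w(u,z)$, again contradicting P3. In the remaining case at least two of the three are T-configurations with poles $\{u,z,x_i\}$; here I argue directly, taking the ``$u$--$z$ sides'' (and the remaining sides) of the configurations, which are pairwise internally-disjoint $u$--$z$ curves, and checking that in every cyclic arrangement of these curves around $u$ and $z$ one of the regions so created is a pocket bounded by impenetrable arcs with only two pole exits, contradicting $3$-connectivity once more. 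Finally, P5 is proved by the same technique: if two distinct T-configurations shared two crossings, then because the pole-ends of the two edges at a shared crossing are determined, the two configurations would agree on two of their three sides and hence have the same three poles; their regions would then be nested or ``side by side'', and in either case the inside vertices of one of them on its non-shared side are cut off from its inside vertices on a shared side by a lune exiting through only two poles.

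The routine parts are P2 and the reductions it powers (P3, P4, and via P3 the W- and B-cases of P1). The main obstacle is the purely topological bookkeeping in the remaining case of P1: unlike the W- and B-cases, there is no single shared edge anchoring the three configurations, so one cannot simply invoke P3 but must enumerate the cyclic orders of several internally-disjoint $u$--$z$ curves and, for each, exhibit a two-vertex-separated pocket; making this enumeration genuinely finite and exhaustive, rather than asserting it ``by a similar argument'', is where the care is needed. A secondary, lighter obstacle is establishing the nesting and the ``boundaries meet only at the poles'' claims cleanly enough that the lune decompositions used throughout P3, P4, and P5 are valid.
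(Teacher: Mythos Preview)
Your approach is essentially the paper's: both rest on the observation that a closed curve built from edge fragments of a $1$-plane graph meets the rest of the graph only at its endpoints, so vertices on both sides would make those endpoints a $2$-cut. The paper packages this once as a \emph{separating curve} $C(u,p,z,q)$ (four fragments $u\text{--}p\text{--}z\text{--}q\text{--}u$) and a one-line Claim: all vertices except $u,z$ lie on the same side of $C$. Every property is then a direct application: P2 uses the boundary of $w(u,z)$ as $C$; P4 takes the two fragment-pairs of the two B-configurations as $C$ (the dependent case is dismissed because it would force two parallel $(u,z)$ edges); P5 takes the two \emph{unshared} fragment-pairs of the two T-configurations as $C$; and P1 and P3 observe that three (resp.\ two) independent configurations on poles $u,z$ supply three (resp.\ two, plus the two of $w(u,z)$) internally-disjoint $u$--$z$ fragment-pairs, any two of which form a separating curve with the third pair witnessing a vertex on the wrong side.

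What this buys over your write-up is exactly the elimination of the obstacle you flag in P1. You reduce the W- and B-cases of P1 to P3/P4 and then face a cyclic-order enumeration for the T-case; the paper instead notes that every forbidden configuration with poles $u,z$ contributes at least one $u$--crossing--$z$ fragment-pair, and with three such disjoint pairs a single picture (three arcs from $u$ to $z$) handles all types uniformly. Your lune/pocket decompositions in P3, P4, P5 are correct but are just localized instances of the same separating-curve claim, so you can replace each of them by a one-line invocation once the claim is stated. One small omission: in P4 you do not treat the dependent case (two B-configurations sharing their unique crossing); the paper dispatches it by noting this would require two parallel $(u,z)$ edges, impossible in a simple graph.
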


\setlength\intextsep{2pt}
\setlength{\columnsep}{10pt}%
\begin{wrapfigure}{r}{0.24\textwidth}
	%	\hfill
	\small
	\centering
	\includegraphics[width=0.24\textwidth]{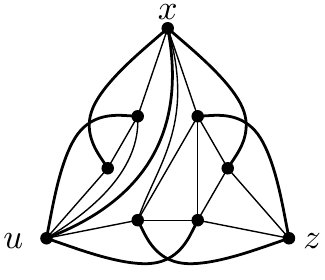}	
	\caption{A non-redundant set contains only $b(u,x)$.}
	\label{fi:redundant}
\end{wrapfigure}
%To prove the upper bound of Theorem~\ref{th:main} we assign forbidden configurations to their poles.
Intuitively, two dependent forbidden configurations may be drawn by inserting only one reflex corner on a common pole. By following this intuition, our goal is to find a set of forbidden configurations that ``cover'' all others and such that they can be drawn by introducing only a small number of reflex corners per vertex. To formalize this idea, we give  the following definition.
A set $F$ of forbidden configurations of $G$ is \emph{non-redundant} if it contains: (1) all B-configurations of $G$; (2) all T-configurations of $G$ independent of B-configurations; (3) zero, one, or two copies of the W-configuration in $G$ (there is at most one by {\p\ref{pr:w-single}}), if the W-configuration exists and has two, one, or zero, respectively, dependent B-configurations. For example, in the graph of Fig.~\ref{fi:redundant}, $t(u,x,z)$ and $b(u,x)$ are dependent, and thus $b(u,x) \in F$ while $t(u,x,z) \notin F$.

A T-configuration $t$ of $G$ is \emph{separating} if $G$ contains a pole $v$ that is not a pole of $t$ and that lies in the interior region of $t$ (i.e., inside the bounded region delimited by the external boundary of $t$). Let $\beta$, $\tau$, and $\omega$ be the number of B-/T-/W-configurations in $F$, respectively. Note that, if $G$ is (a subgraph of) a $3$-connected $1$-plane graph,  $F$ contains at most one W-configuration  by {\p\ref{pr:w-single}}, i.e., $\omega \le 1$. It follows that, if $F$ contains zero or one copy of the (at most one) W-configuration of $G$, we have $|F|=\beta+\tau+\omega$, else $|F|=\beta+\tau+\omega+1$.

\begin{restatable}{lemma}{noseparating}\label{le:noseparating}
Let $G$ be (a subgraph of) a $3$-connected $1$-plane graph, let $F$ be a set of non-redundant forbidden configurations of $G$, and let $P$ be the set of its poles.
If $G$ has no separating T-configurations, then $|F| \le 4|P|-8$ if $\omega=0$ and $|F| \le 4|P|-7$ otherwise.
Also, if $\omega=0$, then $|F| = 4|P|-8$ if and only if $\beta=3|P|-6$ and $\tau = |P|-2$.
\end{restatable}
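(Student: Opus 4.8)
The plan is to set up a discharging/counting argument on a planar structure derived from $G$ and $F$. The key idea is that the B- and T-configurations of $F$, being pairwise ``almost disjoint'' in the plane, behave like faces of a planar graph on the pole set $P$. Concretely, I would first build an auxiliary plane (multi)graph $H$ whose vertex set is $P$ and whose edges record, for each configuration in $F$, its pair (for a B-configuration $b(u,z)$, or a W-configuration $w(u,z)$) or triple (for a T-configuration $t(u,x,z)$) of poles; a T-configuration contributes a small planar gadget on its three poles placed inside its interior region. The absence of separating T-configurations guarantees no pole lies strictly inside the interior region of a T-configuration, so this gadget can be drawn without enclosing other vertices of $P$; the nesting/disjointness of interior regions of independent configurations (from Theorem~\ref{th:lowerboud}'s proof and from {\p\ref{pr:nothreesamepoles}}, {\p\ref{pr:w-poles}}) ensures $H$ is planar with each configuration of $F$ occupying its own portion of a face.

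The main computation is then a Euler-formula estimate. Treating $H$ as a plane graph on $|P|$ vertices, I would bound the number of its faces (equivalently the number of configuration-regions) using $|E(H)| \le 3|P|-6$ after accounting for the gadgets. Since $F$ contains \emph{all} B-configurations and, by {\p\ref{pr:nothreesamepoles}}, no three independent configurations share a pole pair, each unordered pair of poles is used by at most two configurations of $F$; the W-configuration, when present, is the unique exceptional ``doubled'' pair (by {\p\ref{pr:w-single}} and {\p\ref{pr:b-samepoles}}). A careful accounting shows $\beta \le 3|P|-6$ and $\tau \le |P|-2$: the first is the planar edge bound applied to the B-configurations (each sitting in a distinct face-slot), and the second follows because each T-configuration of $F$ is independent of all B-configurations (by definition of non-redundant), so it must ``own'' a triangular face of the planar skeleton formed by the B-configurations, and a triangulation on $|P|$ vertices has $2|P|-4$ faces, but each T-configuration blocks a constant number of them in a way that yields the $|P|-2$ bound. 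Summing, $|F| = \beta + \tau + \omega \le (3|P|-6) + (|P|-2) + \omega = 4|P|-8+\omega$, giving $4|P|-8$ when $\omega=0$ and $4|P|-7$ when $\omega=1$ (where in the latter case the extra ``$+1$'' in the formula for $|F|$ when there are two dependent B-configurations is absorbed because two B-configurations are then subtracted from the count of independent ones).

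For the equality characterization when $\omega=0$: equality $|F| = 4|P|-8$ forces equality in both $\beta \le 3|P|-6$ and $\tau \le |P|-2$ simultaneously, since each is a separate upper bound that sums to the total; conversely, if $\beta = 3|P|-6$ and $\tau = |P|-2$ then $|F| = \beta+\tau = 4|P|-8$ directly. I would state this as an immediate consequence of the additive bound once the two individual inequalities are established as tight-or-slack independently.

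The hard part will be making the auxiliary planar graph $H$ precise enough that the Euler-formula bound is rigorous — in particular, justifying that the interior regions of the configurations in $F$ can be simultaneously realized as interior-disjoint regions of a single plane embedding on $P$, and that T-configuration gadgets do not inflate the edge count beyond $3|P|-6$. This is exactly where the hypotheses are used: no separating T-configuration (so gadgets enclose no poles), {\p\ref{pr:nothreesamepoles}} (so pole pairs are at most doubled), {\p\ref{pr:w-single}} and {\p\ref{pr:w-poles}} (so the W-configuration, if any, contributes the single exceptional face and nothing independent shares its poles), and {\p\ref{pr:t-onecrossing}} together with {\p\ref{pr:b-samepoles}} (to control how dependent configurations can overlap). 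I expect the bookkeeping for the W-configuration case, reconciling the two definitions of $|F|$ (with or without the ``$+1$''), to require the most care.
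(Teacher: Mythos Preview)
Your high-level plan --- build an auxiliary plane graph on the pole set $P$ and count --- is exactly what the paper does. But your execution has a genuine gap: the inequality $\tau \le |P|-2$ is \emph{not} a standalone bound, and your sketch does not prove it. You argue that each T-configuration in $F$ ``owns'' a triangular face of the planar skeleton formed by the B-configurations, and then hand-wave from $2|P|-4$ faces down to $|P|-2$ by saying each T-configuration ``blocks a constant number of them.'' That step is unjustified, and in fact when $\beta$ is small there is no reason $\tau$ should be bounded by $|P|-2$: the only constraint the planarity of your auxiliary graph gives on $\tau$ alone is $\tau \le 2|P|-4$ (from $3\tau \le 6|P|-12$, since the T-configuration triangles are edge-disjoint in a plane multigraph with edge multiplicity at most two). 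Your equality argument then collapses for the same reason: you cannot conclude that $\tau = |P|-2$ is forced merely because ``each is a separate upper bound that sums to the total,'' since one of those two ``separate'' bounds is not valid.

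What the paper does instead is count edges, not faces. In the auxiliary graph $G_A$ each \emph{crossing} of each configuration contributes one edge between the two incident poles, so a B-configuration contributes one edge, a T-configuration three, and a W-configuration two. Using {\p\ref{pr:b-samepoles}}, {\p\ref{pr:t-onecrossing}}, the non-redundancy of $F$, and the absence of separating T-configurations, these edges are all distinct; by {\p\ref{pr:nothreesamepoles}} the multiplicity of any edge in $G_A$ is at most two, so $m_A \le 6|P|-12$. This yields the \emph{joint} constraint $\beta + 3\tau \le 6|P|-12$ (for $\omega=0$), which together with $\beta \le 3|P|-6$ gives a two-constraint linear program whose maximum of $\beta+\tau$ is $4|P|-8$, attained precisely at $\beta=3|P|-6$, $\tau=|P|-2$. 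The equality characterization thus falls out of the LP vertex, not from two independent tight inequalities. To repair your argument you need this joint edge-count constraint; the face-count route as written does not close.
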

\begin{proof}
By Lemma~\ref{le:good}, properties {\p\ref{pr:nothreesamepoles}--\p\ref{pr:t-onecrossing}} hold for $G$.
We define an auxiliary graph $G_A$ whose edges represent the crossings of the forbidden configurations in $F$.
More precisely, let $f$ be a forbidden configuration. For each crossing $k$ of $f$ there exist two poles of $f$, denoted by $u_k$ and $z_k$, such that the edge fragments $(u_k,k)$ and $(k,z_k)$ belong to the external boundary of $f$. Let $G_A$ be the  graph with $n_A=|P|$ vertices and $m_A$ edges obtained from $G$ as follows; see, e.g., Fig.~\ref{fi:ga}. Remove first all edges of $G$ and then all vertices of $G$ that are not poles of any forbidden configuration. For each forbidden configuration $f$ of $F$ and for each crossing $k$ of $f$, draw an edge $(u_k,z_k)$ on the external boundary of $f$ by following the two edge fragments $(u_k,k)$ and $(k,z_k)$.
Note that $G_A$ is plane and may have parallel edges. By {\p\ref{pr:nothreesamepoles}}, each pair of adjacent vertices of $G_A$ is connected by at most two parallel edges, that is, $m_A \le 2(3n_A-6)=6n_A-12$. Also, $G_A$ contains an edge for each B-configuration, two edges for the W-configuration (if it exists), and three edges for each T-configuration in $F$. A B-configuration does not share an edge with a T-configuration by construction of $F$, also, two B-configurations do not share an edge by {\p\ref{pr:b-samepoles}}, and finally, no two T-configurations share an edge as otherwise one of them would be a separating T-configuration (they would be two dependent T-configurations such that one has a pole inside the interior region of the other). On the other hand, a W-configuration can share an edge with a B- or with a T-configuration. Let $0 \le s \le 2$ be the number of edges of $G_A$ that a W-configuration shares with other configurations.   From the argument above, it follows that $m_A = \beta + 3\tau + 2\omega - s \le 6n_A-12$.

\begin{figure}[t]
	\centering
	\subfigure[$G$]{\includegraphics[width=0.3\textwidth,page=2]{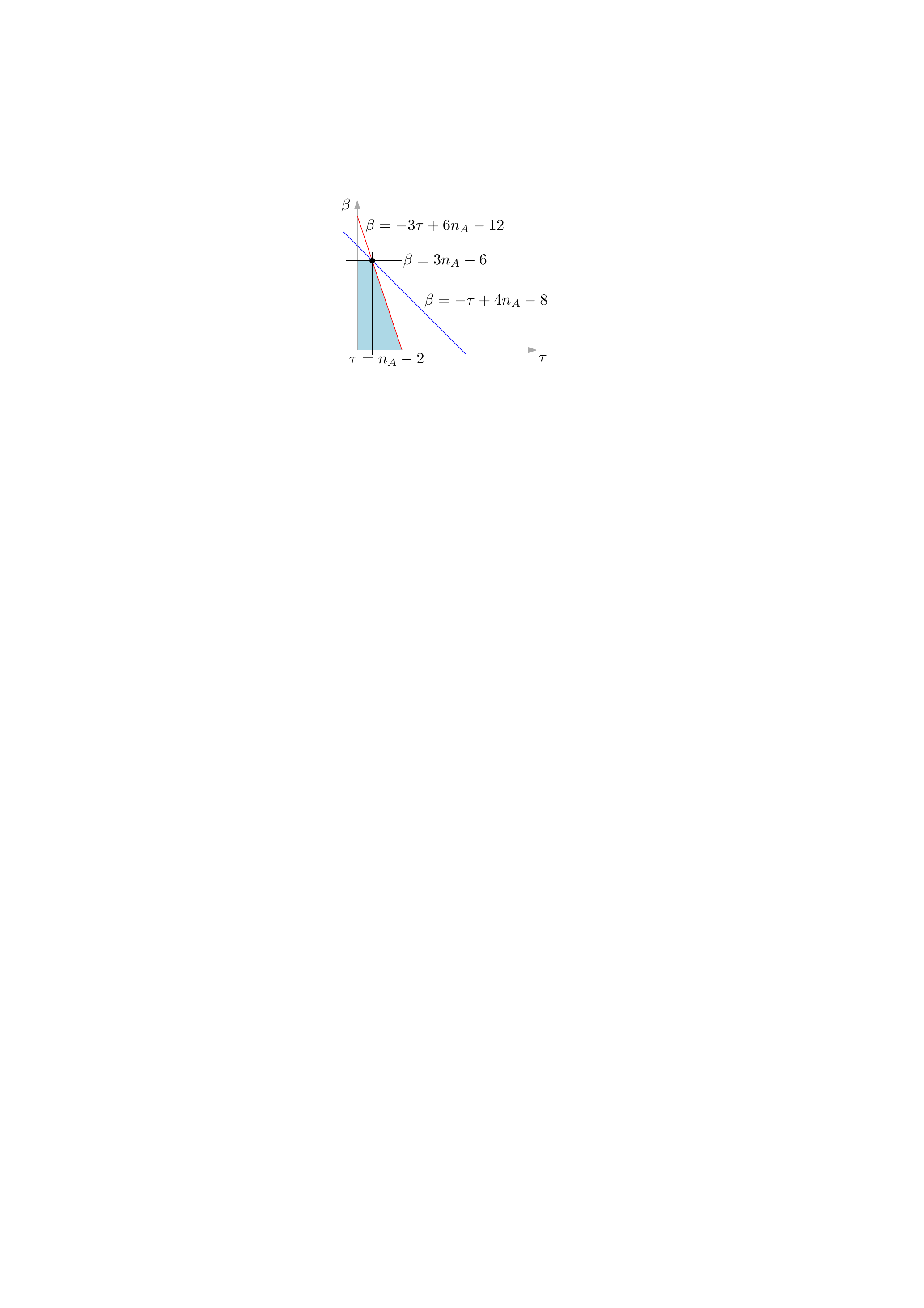}}\hfil
	\subfigure[$G_A$]{\includegraphics[width=0.3\textwidth,page=3]{chart}}
	\caption{Illustration for the proof of Lemma~\ref{le:noseparating}: Construction of the auxiliary graph $G_A$ from $G$. The vertices of $G$ that are not poles  are  smaller, and the edges of $G$ that are not part of forbidden configurations are thinner. The edges of $G_A$ are drawn by following the remaining (bold) pairs of crossing edges of $G$.\label{fi:ga}}
	
\end{figure}

If $\omega=0$ (and hence $s=0$), then $|F|=\beta+\tau$ and $m_A= \beta+3\tau \le 6n_A-12$. Also, $\beta \le 3n_A-6$ by {\p\ref{pr:b-samepoles}}. For a fixed value of $n_A$ (note that $n_A\ge2$ if $|F|>0$), consider the function $f(\beta,\tau)=\beta+\tau$ in the domain defined by the two inequalities $\beta+3\tau \le 6n_A-12$ and $\beta \le 3n_A-6$. By studying the function $f(\beta,\tau)$, it is easy to verify that its maximum value in the above domain is $4n_A-8$, and that this value is obtained if and only if $\beta=3n_A-6$ and $\tau=n_A-2$.
If $\omega=1$ then either $|F|=\beta+\tau+1$ and $\beta=3n_A-6$, or $|F|=\beta+\tau+2$ and $\beta \le 3n_A-7$.
 In both cases $|F|\le 4n_A-7$.
\qed\end{proof}

%We are now ready to prove a theorem that ``matches'' the set $F$ to $P$.

\begin{restatable}{theorem}{thmatching}\label{th:matching}
  Let $G$ be a $3$-connected $1$-plane graph and let $F$ be a set of non-redundant forbidden configurations of $G$. For each configuration $f \in F$, it is possible to assign $f$ to one of its poles such that every pole is assigned at most five elements of $F$.
\end{restatable}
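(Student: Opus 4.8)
The plan is to recast the required assignment as a feasibility question about a bipartite graph, reduce it by Hall's theorem to a local counting inequality, verify that inequality by a face-counting argument in the spirit of Lemma~\ref{le:noseparating}, and treat separating T-configurations separately by exploiting that their interior regions are laminar.

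\smallskip\noindent\emph{Reduction.} Let $B$ be the bipartite graph with vertex classes $F$ and the pole set $P$, where $f\in F$ is adjacent to each of its poles. An assignment as in the statement is exactly a choice of one incident pole for every $f\in F$ such that no pole is chosen more than five times; by the deficiency version of Hall's theorem (equivalently, by integrality of the max-flow instance that puts capacity $5$ on every pole arc), such a choice exists if and only if $|F'|\le 5\,|N_B(F')|$ for every $F'\subseteq F$. Writing $F[P']$ for the set of configurations of $F$ all of whose poles lie in $P'$, and taking $P'=N_B(F')$ so that $F'\subseteq F[P']$, it suffices to prove
\[
|F[P']|\ \le\ 5\,|P'|\qquad\text{for every }P'\subseteq P .
\]

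\smallskip\noindent\emph{Counting when no separating T-configuration is involved.} Fix $P'$, let $\hat G\subseteq G$ be the subgraph carried by the configurations of $F[P']$, and recall that Properties {\p\ref{pr:nothreesamepoles}--\p\ref{pr:t-onecrossing}} of Lemma~\ref{le:good} hold for $\hat G$. As in the proof of Lemma~\ref{le:noseparating}, build the plane multigraph $G_A$ on $P'$ (one edge per B-configuration of $F[P']$, two parallel edges for the at most one W-configuration, a triangle per T-configuration), so that $m_A\le 6|P'|-12$ for $|P'|\ge 3$ by {\p\ref{pr:nothreesamepoles}} (the cases $|P'|\le 2$ being immediate) and $\beta'\le 3|P'|-6$ up to the W-exception by {\p\ref{pr:b-samepoles}}. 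Since $F[P']\subseteq F$ and $F$ is non-redundant, no B-configuration shares a crossing with a T-configuration; hence, when $\hat G$ has no separating T-configuration (so that no two T-configurations of $F[P']$ share a crossing), the computation from Lemma~\ref{le:noseparating} applies and yields $|F[P']|\le 4|P'|-7<5|P'|$.

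\smallskip\noindent\emph{Separating T-configurations.} The remaining case, where $\hat G$ contains a separating T-configuration, relies on the structural fact (which I would establish first) that the interior regions of any two T-configurations of $G$ are either nested or internally disjoint: their boundaries cannot cross, since every point where the two boundaries meet transversally would be a crossing shared by the two T-configurations -- of which there is at most one by {\p\ref{pr:t-onecrossing}} -- while two closed curves meet transversally in an even number of points. Thus the interior regions of the separating T-configurations form a laminar family, and I would induct by peeling off an innermost separating T-configuration $t(u,x,z)$: no edge of $G$ crosses the boundary of $t$ (such an edge would acquire a second crossing), so every configuration of $F$ lies strictly inside $t$ or outside it, the inside part contains no separating T-configuration and is governed by the previous step with load $4$, and the outside part is a strictly smaller instance. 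The only poles common to the two parts are $u,x,z$, so the real work is to keep each of them at total load at most five after gluing; I would arrange this by strengthening the inductive statement so that, whenever the instance is bounded by a T-configuration on a prescribed triple of poles, those three poles are left with spare capacity (say, load at most $2$) to absorb what is charged to them from outside. I expect this bookkeeping at the $O(1)$ shared poles of separating T-configurations to be the main obstacle, and it is precisely where the unit of slack between the target load $5$ and the bound $4|P|-8$ of Lemma~\ref{le:noseparating} is consumed.
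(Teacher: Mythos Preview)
Your overall strategy---reduce to Hall's condition on the bipartite graph between $F$ and $P$, handle the non-separating case via Lemma~\ref{le:noseparating}, and then induct by peeling off an innermost separating T-configuration---is exactly the route taken in the paper. The laminarity argument you give is correct and is implicit in the paper's choice of an innermost separating $t(u,x,z)$.

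Where you diverge is in the bookkeeping at the three shared poles. You propose to \emph{strengthen the inductive hypothesis} so that the assignment inside a bounding T-configuration leaves each of $u,x,z$ with load at most~$2$. You rightly flag this as the ``main obstacle''; in fact it is not clear that such a per-pole guarantee can be extracted from Lemma~\ref{le:noseparating}, which only gives a global count. The paper sidesteps this entirely: it never constructs the assignment inductively, it inducts on the Hall inequality $|F'|\le 5|N(F')|$ itself. The inside piece $G'_{IN}$ has no W-configuration and---crucially---does not contain any of $b(u,x)$, $b(u,z)$, $b(x,z)$ (these, if present, live on the outside), so the equality case of Lemma~\ref{le:noseparating} is missed by~$3$ and one gets $|F'_{IN}|\le 4|N(F'_{IN})|-11$ rather than $-8$. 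Combining this with the inductive bound $|F'_{OUT}|\le 5|N(F'_{OUT})|$ and the identity $|N(F'_{IN})|+|N(F'_{OUT})|=|N(F')|+3$ yields $|F'|\le 5|N(F')|+4-|N(F'_{IN})|$, and since $t$ is separating we have $|N(F'_{IN})|\ge 4$. So the slack you were hoping to manufacture by a strengthened invariant is already present numerically, and the ``main obstacle'' disappears.
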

\begin{proof}
Let $H=(F \cup P, E \subseteq F \times P)$ be the bipartite graph with vertex set $F \cup P$ (where $P$ is the set of poles of $G$), and having an edge $(f,u)$ with $f \in F$ and $u \in P$ if $u$ is a pole of $f$. A \emph{$k$-matching from} $F$ {\em into} $P$ is a set $M \subseteq E$ such that each vertex in $F$ is incident to exactly one edge in $M$ and each vertex in $P$ is incident to at most $k$ edges in $M$. For a subset $F' \subseteq F$, we denote by $N(F')$ the set of all vertices in $P$ that are adjacent to a vertex in $F'$. We prove the existence of a $5$-matching of $F$ into $P$ by using Hall's theorem~\cite{hall}, i.e., we show that $\forall F' \subseteq F: |F'| \le 5 |N(F')|$.

Let $G'$ be any subgraph of $G$ that contains all and only the forbidden configurations in $F'$.
By Lemma~\ref{le:good},  $G'$ of $G$ satisfies {\p\ref{pr:nothreesamepoles}}--{\p\ref{pr:t-onecrossing}}.
The proof is by induction on the number $h$ of separating T-configurations of $G'$.
Let $G'_A$ be the auxiliary graph of $G'$ constructed as in the proof of Lemma~\ref{le:noseparating}.
In the base case $h=0$, we have that $|F'| \le 4|N(F')|-7$ by Lemma~\ref{le:noseparating}.
Suppose now that the claim holds for $h-1>0$. Let $t(u,x,z)$ be a separating T-configuration of $G'$ such that it does not contain any other separating T-configuration in its interior. Let $G'_{IN}$ be any subgraph of $G'$ containing $t(u,x,z)$ and all and only the forbidden configurations of $F'$ that are inside $t(u,x,z)$, that is, its auxiliary graph $G'_{A,IN}$ is the subgraph of $G'_A$ having the three edges of $t(u,x,z)$ as outer face. Note that $G'_A$ may contain some of the possible B-configurations $b(u,x)$, $b(u,z)$, and $b(x,z)$, but their corresponding edges of $G'_A$ are not part of $G'_{A,IN}$. We denote by $F'_{IN}$ the set of forbidden configurations of $F'$ in $G'_{IN}$. Let $G'_{OUT}$ be any subgraph of $G'$ containing all and only the forbidden configurations of $F'$ except those in $F'_{IN}$, but including $t(u,x,z)$. We denote by $F'_{OUT}$ the set of forbidden configurations of $F'$ in $G'_{OUT}$. Since $G'_{OUT}$ contains $h-1$ separating T-configurations, by induction we have that $|F'_{OUT}| \le 5|N(F'_{OUT})|$. On the other hand, $G'_{IN}$ does not contain separating T-configurations and it is a subgraph of $G$, thus $|F'_{IN}| \le 4|N(F'_{IN})|-7$ by Lemma~\ref{le:noseparating}. In particular, $G'_{IN}$ does not contain any W-configuration, since $G$ can have at most one and it must be part of its outer face. Hence, $|F'_{IN}| \le 4|N(F'_{IN})|-8$, and in particular $|F'_{IN}| = 4|N(F'_{IN})|-8$ if and only if its number of B-configurations  is such that $\beta_{IN}=3|F'_{IN}|-6$ (Lemma~\ref{le:noseparating}). However, the (at most) three B-configurations $b(u,x)$, $b(u,z)$, and $b(x,z)$ are not part of $G'_{IN}$ by construction, and therefore  $|F'_{IN}| \le 4|N(F'_{IN})|-11$. Since $|N(F'_{IN})|+|N(F'_{OUT})|=|N(F')|+3$ (we have to consider the three vertices $u,x,z$ that are poles in both graphs), and since $|F'_{IN}|+|F'_{OUT}| \ge |F'|$, it follows that $|F'| \le |F'_{IN}|+|F'_{OUT}| \le 4|N(F'_{IN})|-11 + 5|N(F'_{OUT})|$, and thus $|F'| \le 5 |N(F')|+4-|N(F'_{IN})|$, which implies that $|F'| \le 5|N(F')|$ when $|N(F'_{IN})| \ge 4$. Since $t(u,x,z)$ is a separating T-configuration, $G'_{IN}$ contains at least one pole more than $u,x,z$, thus $|N(F'_{IN})| \ge 4$.
\qed\end{proof}

\subsection{Proving vertex complexity at most $10$}\label{sse:drawing-algo}

We briefly recall an algorithm by Biedl et al.~\cite{Biedl2017}, called \algorvr, which takes as input a $1$-plane graph  with no forbidden configurations and that returns an RVR of this graph. First, the planarization $G_p$ of $G$ is computed. The plane graph $G_p$ is then triangulated in such a way that the degree of dummy vertices remains four, i.e., avoiding the addition of edges incident to dummy vertices. The resulting graph $G_t$  does not contain any planarized forbidden configuration (i.e., any subgraph such that by replacing dummy vertices with crossings we obtain a forbidden configuration). Moreover, if $G$ is $3$-connected, $G_t$ does not contain parallel edges (and hence is $3$-connected as well). As next step, \algorvr decomposes $G_t$ into its $4$-connected components, it computes an RVR for each $4$-connected component, and finally it patches the drawings by suitably identifying the outer face of each component with the corresponding inner face of its parent component. The algorithm guarantees that each dummy vertex is represented by a rectangle having one visibility on each of its four sides. This property allows to replace that rectangle with a crossing. Also, we observe that for each $4$-connected component $\mathcal C$ of $G_t$, \algorvr chooses one edge $e$ on the outer face of $\mathcal C$ called the \emph{surround edge} of $\mathcal C$. This edge is chosen so to satisfy the following two conditions: (1) The inner face of $\mathcal C$ containing $e$ on its boundary consists of the two end-vertices of $e$ plus a third vertex which is not dummy; (2) If the surround edge of the parent component $\mathcal C'$ of $\mathcal C$ (if $\mathcal C'$ exists) is an edge $e'$ of the outer face of $\mathcal C$, then $e=e'$. The feasibility of this choice is guaranteed by the absence of planarized forbidden configurations in $G_t$. The resulting RVR is such that all edges of $\mathcal C$ incident to an end-vertex of $e$ are represented by horizontal visibilities.  %Appendix~\ref{ap:drawing-algo} illustrates a running example of \algorvr.

%\begin{figure}[t]
%\centering
%\subfigure[]{\includegraphics[width=0.33\columnwidth,page=1]{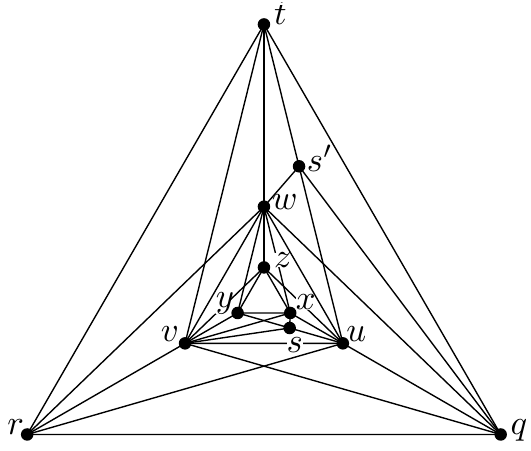}\label{fi:example-rvr-1}}\hfil
%\subfigure[]{\includegraphics[width=0.33\columnwidth,page=3]{example-rvr}\label{fi:example-rvr-2}}\hfil
%\subfigure[]{\includegraphics[width=0.33\columnwidth,page=4]{example-rvr}\label{fi:example-rvr-3}}
%\caption{Illustration for \algorvr. (a) A $1$-plane graph $G$ with no forbidden configurations. (b) The maximal plane graph $G_t$ whose separating triangles are bold and whose surround edges are red. (c) The final RVR $\gamma$ of $G$. The orange vertices are introduced to remove T-configurations in a preprocessing step, whicht is not included in \algorvr and is explained in the proof of Lemma~\ref{le:upperboud}. \label{fi:algorvr}}
%\end{figure}

\begin{restatable}{lemma}{lemmaupperbound}\label{le:upperboud}\emph{\textbf{[*]}}
Every $3$-connected $1$-plane graph admits an OPVR with vertex complexity at most ten.
\end{restatable}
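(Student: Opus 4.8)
The strategy is to combine the assignment given by Theorem~\ref{th:matching} with a modified run of \algorvr. First I would fix a set $F$ of non-redundant forbidden configurations of $G$ and, by Theorem~\ref{th:matching}, an assignment of each $f\in F$ to one of its poles so that every pole receives at most five elements of $F$. The point of using a \emph{non-redundant} set is that, by construction, removing all configurations in $F$ destroys every B-, T-, and W-configuration of $G$: each T-configuration not in $F$ is dependent on a B-configuration (which is in $F$ and shares a crossing with it), and the W-configuration, if present, is handled explicitly. Concretely, for each $f\in F$ I delete from $G$ one of the two edges forming a crossing of $f$ — for a B- or W-configuration the edge whose deletion removes the crossing assigned to the chosen pole, and for a T-configuration one of its three crossing edges — so that the resulting graph $G^-$ is $1$-plane and contains no forbidden configuration. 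By the characterization of Biedl et al.~\cite{Biedl2017}, $G^-$ admits an RVR, which I compute with \algorvr. The key bookkeeping claim is that, because of the assignment, each pole $p$ of $G$ has at most five deleted edge-fragments that must later be reinserted as horizontal visibilities incident to $p$ (a T-configuration contributes one crossing per pole, a B- or W-configuration one crossing to its chosen pole).

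Next I would reinsert the deleted configurations into the RVR, bending rectangles into orthogonal polygons. Here I follow the reinsertion technique of Di Giacomo et al.~\cite{DBLP:journals/algorithmica/GiacomoDELMMW18}: each reinserted edge is routed as a horizontal visibility that leaves its assigned pole, and accommodating one such visibility costs at most two reflex corners on the polygon of that pole (one "notch" is carved into the rectangle to create the horizontal mouth for the visibility, which introduces two reflex corners, while the other endpoint already has a free side available — this is where the surround-edge property of \algorvr is used, ensuring the non-assigned endpoints have horizontal visibilities available, and that dummy vertices keep their four-sided rectangles so crossings are preserved). Since each pole is the assigned pole of at most five configurations and each such configuration forces the reinsertion of a bounded number (in fact one per assigned crossing) of horizontal visibilities at that pole, each polygon gains at most $2\cdot 5=10$ reflex corners. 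Non-pole vertices are never bent, so they remain rectangles with zero reflex corners. This yields an OPVR of $G$ with vertex complexity at most ten.

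The main obstacle — and the part that needs genuine care rather than invocation of prior work — is verifying that the reinsertions can be performed \emph{simultaneously} and \emph{independently} without conflicts: the horizontal visibilities for different reinserted configurations at the same pole must be nested or stacked so that the at most five notches on one rectangle can coexist, and the routing of a reinserted edge through the region of its configuration must not cross any rectangle or any other visibility of the RVR. This is controlled by the fact that distinct forbidden configurations assigned to the same pole have essentially disjoint interior regions (or are nested along the surround-edge hierarchy), so their reinsertion routes occupy separate "corridors" emanating from the pole; one also has to check the special case of the W-configuration and of two B-configurations forming it, where the two copies in $F$ share both poles, so that the two reinserted edges leave from, or arrive at, the same pair of rectangles — here the non-redundancy condition (at most two copies, and only when there are fewer dependent B-configurations) keeps the count within budget. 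Making this geometric compatibility precise, together with the constant-width $\epsilon$-visibility adjustments needed to realize the notches on the integer grid, is the technical heart of the argument; the reflex-corner count itself is then a direct consequence of the five-per-pole bound from Theorem~\ref{th:matching}.
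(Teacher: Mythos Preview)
Your high-level strategy matches the paper's: use the $5$-matching from Theorem~\ref{th:matching}, modify $G$ to eliminate all forbidden configurations, run \algorvr, and then undo the modifications at a cost of at most two reflex corners per assigned configuration. The difference is in \emph{how} you modify $G$, and this is where your plan has a genuine gap.

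You propose to \emph{delete} one crossing edge $(u,v)$ per configuration $f\in F$ and later reinsert it as a horizontal visibility, carving a two-reflex notch at the assigned pole $u$ and relying on the surround-edge property to give a free side at the other endpoint $v$. But after deletion nothing ties $v$ to any surround edge: that property of \algorvr controls the visibilities at the two endpoints of the chosen surround edge of a $4$-connected component, and $v$ is in general neither of them. In the resulting RVR there is no reason $r(v)$ sits where a horizontal segment from a notch in $r(u)$ can reach it while crossing exactly the visibility for $(w,z)$ and nothing else; other rectangles may block the corridor, and $(w,z)$ itself may be oriented the wrong way. The Di~Giacomo et al.\ reinsertion you invoke works because they start from a \emph{bar}-visibility representation of a planar graph (all existing edges vertical), so horizontal reinsertions live in an otherwise empty direction; in an RVR both directions are already used. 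Edge deletion also risks destroying $3$-connectivity, which the paper's surround-edge case analysis relies on.

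The paper sidesteps all of this by not deleting edges. For each $f\in F$ matched to pole $u$, with crossing edges $(u,v),(w,z)$ meeting at $k$, it \emph{subdivides} the fragment $(u,k)$ with a new vertex $s$ and adds the uncrossed edges $(s,z)$ and $(s,w)$. This keeps the graph $3$-connected, destroys $f$ (and every dependent configuration) without removing anything, and --- crucially --- plants a degree-controlled placeholder $s$ adjacent to $u,v,w,z$. One then shows that in the triangulated planarization there is a separating triangle through $u$ and $z$ for which $(u,z)$ is a feasible surround edge; this forces all visibilities at $r(s)$ to be horizontal. Reinsertion is then purely local: remove $r(s)$ and replace it by a rectangular \emph{spoke} glued onto $r(u)$, so that the former $(s,v)$ visibility now attaches to $r(u)$. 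Each spoke contributes exactly two reflex corners to $r(u)$, and there are at most five spokes per pole. This placeholder-vertex idea is what your plan is missing; without it the ``free side at $v$'' and ``separate corridors'' claims do not follow from the surround-edge machinery you invoke.
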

\begin{sketch}
\begin{figure}[t]
\centering
\subfigure[]{\includegraphics[width=0.24\columnwidth,page=1]{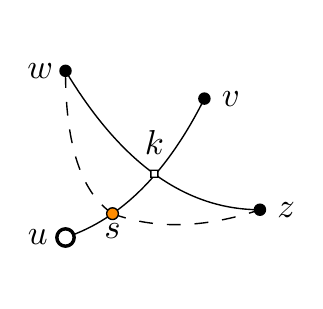}\label{fi:subdivision}}
\subfigure[]{\includegraphics[width=0.24\columnwidth,page=2]{rvr}\label{fi:sep-t}}
\subfigure[]{\includegraphics[width=0.24\columnwidth,page=3]{rvr}\label{fi:sep-w}}
\subfigure[]{\includegraphics[width=0.24\columnwidth,page=4]{rvr}\label{fi:sep-b}}
\caption{Illustration for Lemma~\ref{le:upperboud}: (a) Inserting a subdivision vertex (orange) to remove a forbidden configuration; (b--d) The separating triangles (bold) facing the subdivision vertices in (b) T-, (c) W-, and (d) B-configurations. \label{fi:rvr}}
\end{figure}
Let $G^*$ be a $3$-connected $1$-plane graph. Let $P$ be the set of poles of $G^*$, and let $F$ be a set of non-redundant forbidden configurations of $G^*$.
By Theorem~\ref{th:matching}, there exists a $5$-matching of $F$ into $P$. Let $f$ be a forbidden configuration of $F$, and let $u$ be the pole of $P$ matched with $f$. Let $(u,v)$ and $(w,z)$ be two crossing edges of $f$ such that $z$ is another pole of $f$, and denote by $k$ their crossing. Note that this pair of edges is unique if $f$ is a B-configuration, while if $f$ is a T-configuration there are two such pairs. Also, if $f$ is a W-configuration, by construction of $F$ we have that each of its two crossings is matched with one of its two poles. We subdivide the edge fragment $(k,u)$ with a \emph{subdivision vertex} $s$ and we add the uncrossed edges $(z,s)$ and $(s,w)$; see Fig.~\ref{fi:subdivision} for an illustration. It is easy to verify that this operation removes $f$ from $G^*$ and it does not introduce any new forbidden configuration. Let $G$ be the  $1$-plane graph obtained by introducing a subdivision vertex for each forbidden configuration of $F$; one can show that $G$ does not contain any forbidden configuration and is $3$-connected; hence, it admits an RVR $\gamma$. In particular, it is possible to compute $\gamma$ such that every subdivision vertex introduced when going from $G^*$ to $G$ is incident to a face containing a surround edge of $G_t$. Recall that $G_t$ is a triangulated plane graph  and that dummy vertices have degree four (see the description of \algorvr above and~\cite{Biedl2017} for details), which implies that every dummy vertex is inside a $4$-cycle of uncrossed edges (called \emph{kite} in~\cite{Biedl2017}). In particular, it can be shown that every forbidden configuration $f$ whose matched pole is denoted by $u$, whose subdivision vertex is denoted by $s$ (which is incident to $u$), and whose other pole adjacent to $s$ is denoted by $z$, is such that there exists a separating triangle $\Delta_f$ in $G_t$ having $u$ and $z$ as two of its vertices (for ease of description, we view the outer face of $G_t$ as a separating triangle) and such that $(u,z)$ can be chosen as surround edge.
\begin{figure}[t]
\centering
\subfigure[]{\includegraphics[width=0.28\columnwidth,page=1]{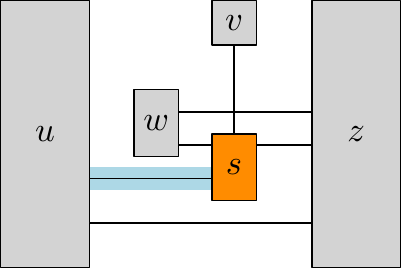}\label{fi:spunzoni-1}}\hfil
\subfigure[]{\includegraphics[width=0.28\columnwidth,page=2]{opvr}\label{fi:spunzoni-2}}
\caption{Illustration for Lemma~\ref{le:upperboud}: Replacing a subdivision vertex, denoted by $s$ in (a), with a spoke, which is represented in (b) with a dashed fill. \label{fi:spunzoni}}
\end{figure}

We finally turn $\gamma$ into an OPVR $\Gamma$ of $G^*$ with vertex complexity at most ten. Let $u$ be a pole of $G^*$ and let $r(u)$ be the rectangle representing it in $\gamma$. Observe that, since $u$ has at most five matched forbidden configurations, $u$ is adjacent to at most five subdivision vertices of $G$. On the other hand, in order to turn $\gamma$ into the desired OPVR, we need to replace all $u$'s visibilities towards subdivision vertices with visibilities towards the other endpoints of the subdivided edges. To this aim, we attach on a side of $r(u)$ a rectangle so that $r(u)$ becomes an orthogonal polygon with two reflex corners for each attached rectangle. Let $s$ be a subdivision vertex adjacent to $u$ and let $(u,v)$ be the edge subdivided by $s$. Also, let $(w,z)$ be the edge that crosses $(u,v)$. From the argument above, $(u,z)$ is a surround edge and all  visibilities incident to $u$ and $z$ are horizontal. We can remove $r(s)$ from $\gamma$, and attach to $r(u)$ a \emph{spoke}, i.e., a rectangle around the visibility $(u,s)$ (see the shaded blue region in Fig.~\ref{fi:spunzoni-1}) so that the visibility between $r(s)$ and $r(v)$ is now attached to this spoke, as shown in Fig.~\ref{fi:spunzoni-2}. By repeating this procedure for all poles and for all their subdivision vertices, we obtain the desired OPVR $\Gamma$ of $G$. In particular, since each pole $u$ is adjacent to at most five subdivision vertices in $G^*$,  we attached to $r(u)$ at most five spokes, hence we created at most ten reflex corners along the boundary of $r(u)$.
\qed\end{sketch}

\subsection{Reducing the vertex complexity to $5$}\label{sse:drawing-algo-refined}
An OPVR can be interpreted as a planar orthogonal drawing whose vertices are the corners of the polygons, the crossing points, and the attaching points between visibilities and polygons, and whose edges are (pieces of) sides of the polygons and (pieces of) visibilities. We now recall a well-known method used to modify an orthogonal drawing in order to move a desired set of vertices and edges while keeping stationary all other elements of the drawing.  This can be achieved with  \emph{zig-zag-bend-elimination-slide curves} ~\cite{BLPS13}.  Any such a curve $D$ contains: (1) a horizontal segment $s_h$ that intersects neither edges nor vertices of the drawing; (2) an ``upward'' vertical half-line $h_l$ that originates at the leftmost endpoint of $s_h$ and whose points are above it; and (3) a ``downward'' vertical half-line $h_r$ that originates at the rightmost endpoint of $s_h$ and whose points are below it. The two half-lines can intersect edges and vertices of the drawing. The {\em region to the right of $D$} is the set of all points that are in the $y$-range of $h_l$ and strictly to the right or on $h_l$, and all points in the $y$-range of $h_r$ and strictly to the right of $h_r$. If such a curve $D$ exists, we can move all points in the region to the right of $D$ by any given $\delta>0$ and leave stationary all other points~\cite{BLPS13}. %Analogous definitions of $D$ can be obtained by flipping it horizontally or vertically (so to obtain regions that are to its left, above it, or below it).

\begin{restatable}{theorem}{theoremupperbound}\label{th:upperboud}
Every $3$-connected $1$-plane graph with $n$ vertices admits an OPVR with vertex complexity at most five. Also, such OPVR can be computed in $\tilde{O}(n^\frac{10}{7})$ time on an integer grid of size $O(n) \times O(n)$.
\end{restatable}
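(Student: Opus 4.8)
The plan is to take the OPVR $\Gamma_{10}$ of vertex complexity at most ten guaranteed by Lemma~\ref{le:upperboud} and to lower its vertex complexity to five by a post-processing that is local to each pole. Recall that in $\Gamma_{10}$ every reflex corner lies on a pole and is created by a \emph{spoke}, that each pole $u$ carries at most five spokes (one for each configuration of $F$ matched with $u$ by Theorem~\ref{th:matching}), and that a spoke attached to the relative interior of a side of $r(u)$ contributes exactly two reflex corners, one at each end of its base. The key observation is that a spoke whose base has one endpoint at a \emph{convex} corner of the current orthogonal polygon contributes only \emph{one} reflex corner: the side of the polygon and the side of the spoke become collinear there and that convex corner disappears, while the spoke adds one reflex and two convex corners. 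Thus every such ``corner attachment'' strictly increases the number of convex corners, so a polygon obtained from the rectangle $r(u)$ (which has four convex corners) always offers a convex corner for the next spoke --- possibly the tip of a previously re-attached spoke, onto which the new one is then ``stacked''. Re-attaching in this way the at most five spokes of each pole leaves every polygon with at most five reflex corners, which is the claimed bound.

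To perform a corner attachment we use a zig-zag-bend-elimination-slide curve (recalled above): it lets us translate the attaching point of the visibility handled by a given spoke, together with the spoke and everything hanging from it, along the side of $r(u)$ until it reaches the chosen convex corner, while every other vertex, visibility and polygon of $\Gamma_{10}$ stays in place; a preliminary slide can first open a corridor free of drawing elements, so the move is always feasible, and by construction it preserves the embedding and all visibilities of $G^*$. When several spokes of the same pole are routed towards the same corner they are stacked one on top of the other; a short case analysis of the possible local patterns shows that each spoke in a stack still costs a single reflex corner and that no crossing is introduced. I expect this bookkeeping --- fixing, for each pole, an order in which its spokes are re-attached, a target convex corner for each of them, and the sequence of slide curves realizing these moves without interference --- to be the main technical hurdle; the rest is routine manipulation of orthogonal drawings.

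Finally, for the running time and the grid size: the planarization, its triangulation, the computation of a set $F$ of non-redundant forbidden configurations, and the execution of \algorvr can all be carried out in near-linear time, and \algorvr outputs an RVR on an $O(n)\times O(n)$ integer grid. The only costlier step is the $5$-matching of Theorem~\ref{th:matching}: it is obtained from a single maximum-flow computation on a bipartite network with $O(n)$ vertices and edges and capacities bounded by a constant (source-to-configuration and configuration-to-pole edges of capacity one, pole-to-sink edges of capacity five), which can be solved in $\tilde O(n^{\frac{10}{7}})$ time with a fast maximum-flow algorithm for networks with small integer capacities. Inserting the $O(n)$ spokes and performing the $O(1)$ slide curves per pole add only $O(n)$ to every coordinate and take near-linear time, so --- after a standard final compaction of the orthogonal drawing --- the whole algorithm runs in $\tilde O(n^{\frac{10}{7}})$ time and returns an OPVR of $G$ on an $O(n)\times O(n)$ grid with vertex complexity at most five.
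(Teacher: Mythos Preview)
Your reduction from ten to five reflex corners differs from the paper's and, as stated, has a real gap. A zig-zag-bend-elimination-slide curve does \emph{not} ``translate the attaching point of the visibility handled by a given spoke, together with the spoke and everything hanging from it, along the side of $r(u)$ \dots\ while every other vertex, visibility and polygon stays in place.'' It rigidly shifts \emph{everything} in the region to the right of the curve; you cannot isolate a single spoke with it, and if the target convex corner lies in the shifted region it moves with the spoke, so no progress is made. Your proposal leans on a primitive the tool does not provide, and the ``stacking'' of several spokes at one corner compounds this: the case analysis you defer is exactly where the argument would have to be made precise.

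The paper avoids this difficulty by exploiting a structural property of the spokes produced in Lemma~\ref{le:upperboud} that you do not mention: every spoke has a \emph{free side}, the side opposite to the unique visibility it carries. One slide curve, whose horizontal segment sits just beyond that free side, opens an empty strip there; the polygon boundary is then locally rerouted through this strip, absorbing exactly one of the two reflex corners of the spoke (Figs.~\ref{fi:zz-1}--\ref{fi:zz-6}). Nothing is moved to a corner; one reflex simply disappears per spoke, yielding at most five per pole.

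Your algorithmic argument also diverges from the paper and is not fully justified. You attribute the $\tilde O(n^{10/7})$ bottleneck to computing the $5$-matching and then claim the slides add only $O(n)$ to every coordinate, recoverable by ``standard compaction''. But each slide moves a region by the width of a spoke, which is not $O(1)$ on an $O(n)\times O(n)$ grid; summed over $O(n)$ slides this does not obviously stay $O(n)$. The paper sidesteps this entirely: the geometric argument is used only for \emph{existence} of vertex complexity five; the algorithm then invokes the min-cost-flow based procedure of Di~Giacomo et~al.\ (which directly outputs an OPVR of minimum vertex complexity on an $O(n)\times O(n)$ grid), replacing its flow subroutine by the unit-capacity algorithm of Cohen et~al.\ to obtain the $\tilde O(n^{10/7})$ bound. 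No matching is computed, and no slide curves are executed at run time.
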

\begin{proof}
\begin{figure}[t]
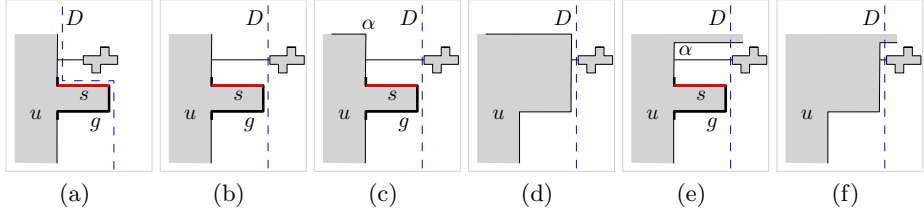

\centering
\subfigure[]{\includegraphics[width=0.16\columnwidth,page=3]{opvr}\label{fi:zz-1}}\hfill
\subfigure[]{\includegraphics[width=0.16\columnwidth,page=4]{opvr}\label{fi:zz-2}}\hfill
\subfigure[]{\includegraphics[width=0.16\columnwidth,page=5]{opvr}\label{fi:zz-3}}\hfill
\subfigure[]{\includegraphics[width=0.16\columnwidth,page=6]{opvr}\label{fi:zz-4}}\hfill
\subfigure[]{\includegraphics[width=0.16\columnwidth,page=7]{opvr}\label{fi:zz-5}}\hfill
\subfigure[]{\includegraphics[width=0.16\columnwidth,page=8]{opvr}\label{fi:zz-6}}
\caption{Illustration for Theorem~\ref{th:upperboud}: Removing reflex corners through zig-zag-bend-elimination curves. The spoke $g$ is bold and its free side is red.\label{fi:zz}}
\end{figure}
Let $G$ be a $3$-connected $1$-plane graph with $n$ vertices. Let $\Gamma$ be an OPVR of $G$ with vertex complexity at most $10$, which exists by Lemma~\ref{le:upperboud}.
We show how to reduce the number of reflex corners around a vertex by using the above defined zig-zag-bend-elimination-slide curves.
Let $p(u)$ be a polygon representing a vertex $u$ of $G$ in $\Gamma$. Note that $p(u)$ is either a rectangle, or a polygon obtained by attaching spokes (rectangles) on the sides of an initial rectangle $r(u)$ in the intermediate RVR (see the proof sketch of Lemma~\ref{le:upperboud}). Moreover, each spoke $g$ of $p(u)$ has only one visibility attached to it on a side, while the opposite side of $g$ does not have any visibility, and we say it is \emph{free}. Suppose that $p(u)$ contains $0 < k \le 5$ spokes and hence $2k$ reflex corners. We apply $k$ zig-zag-bend-elimination-slide curves in order to remove $k$ reflex corners. Since the application of a zig-zag-bend-elimination-slide curve on a spoke may alter the shape of another spoke, in what follows a spoke $g$ is more generally defined as a chain of segments in $p(u)$ such that $g$ contains exactly one reflex corner followed by two inflex corners and by one more reflex corner. The free side of $g$ is hence a segment of $g$ that is between a reflex and an inflex corner and that does not contain any visibility on it. Consider any spoke $g$ of $p(u)$. Without loss of generality, we can assume that the free side $s$ of $g$ is horizontal and is the topmost side of $g$ (up to a mirroring/rotation of the drawing). Let $[x_1,x_2]$ be the $x$-range of $g$, and let $\varepsilon>0$ be a value smaller than the smallest distance between any two points on the boundary of two distinct polygons of $\Gamma$. Let $\alpha$ be the first angle encountered when walking along $p(u)$ counter-clockwise, starting from the leftmost point of $s$. Consider the zig-zag-bend-elimination curve $D$ constructed by using a horizontal segment  above $s$ by $\varepsilon$ and with $x$-range $[x_1+\varepsilon,x_2+\varepsilon]$, as shown in Fig.~\ref{fi:zz-1}. We move all points in the region to the right of $D$ by $\delta=|x_2-x_1|$. After this operation, there is no polygon above $s$ (other than $p(u)$), as shown in  Fig.~\ref{fi:zz-2}. Hence, we can modify $p(u)$ as shown in  Figs.~\ref{fi:zz-3}--\ref{fi:zz-4} if $\alpha$ is an inflex corner, or as shown in Figs.~\ref{fi:zz-5}--\ref{fi:zz-6} if $\alpha$ is a reflex corner. In both cases, after this operation, $p(u)$ contains exactly one less reflex corner and one less spoke.

By repeating this argument for all vertices of $\Gamma$, we obtain an OPVR $\Gamma'$ of $G$ with at most five reflex corners per polygon. It remains to show how to compute an OPVR of $G$ with vertex complexity at most five in time $\tilde{O}(n^\frac{10}{7})$ time and on an integer grid of size $O(n) \times O(n)$. Di Giacomo et al. described an algorithm that computes an OPVR of $G$ with minimum vertex complexity (which is at most five as shown above) in time $O(n^\frac{7}{4}\sqrt{\log n})$ and on an integer grid of $O(n) \times O(n)$ (Theorem 5 in~\cite{DBLP:journals/algorithmica/GiacomoDELMMW18}). This algorithm requires the computation of a feasible flow in a flow network with $O(n)$ nodes and edges. For such a flow network, Di Giacomo et al. used the min-cost flow algorithm of Garg and Tamassia~\cite{Garg1997}, whose time complexity is $O(\chi^{\frac{3}{4}}n\sqrt{\log {n}})$, where $\chi$ is the cost of the flow, which is $O(n)$. Instead, we can use a recent result by Cohen et al.~\cite{DBLP:conf/soda/CohenMSV17} as follows. We first replace all arcs of the flow network with capacity $k>1$ with $k$ arcs having unit capacity (note that $k \in O(1)$). Then the unit-capacity min-cost flow problem can be solved on the resulting flow network in $\tilde{O}(n^\frac{10}{7} \log W)$ time, where $W$ is the maximum cost of an arc, which is $O(1)$. Thus, we can compute an OPVR in $\tilde{O}(n^\frac{10}{7})$ time on an integer grid of size $O(n) \times O(n)$, as desired.
\qed
\end{proof}

Theorem~\ref{th:upperboud}, together with Theorem~\ref{th:lowerboud}, proves Theorem~\ref{th:main}.

\section{Open problems} \label{se:conclusions}
%In this paper we proved that $3$-connected $1$-plane graphs admit an OPVR with vertex complexity at most five and we showed that vertex complexity four may be necessary in some cases. Essential to this result is the study of the combinatorial properties of  B-, T-, and W-configurations.
We conclude by mentioning three open problems that are naturally suggested by the research in this paper.
\begin{inparaenum}[(i)]
\item Close the gap between the lower bound and the upper bound  stated in Theorem~\ref{th:main}.
\item Can the time complexity of Theorem~\ref{th:main} be improved?
\item An immediate consequence of Theorem~\ref{th:matching} is that a  $3$-connected $1$-plane graph $G$ with $|P|$ poles has  a set of non-redundant forbidden configurations whose size is at most $5|P|$. Is this upper bound tight?
\end{inparaenum}

\clearpage

\bibliography{biblio}

\begin{thebibliography}{10}
\providecommand{\url}[1]{\texttt{#1}}
\providecommand{\urlprefix}{URL }
\providecommand{\doi}[1]{https://doi.org/#1}

\bibitem{DBLP:journals/jgaa/BekosKM18}
Bekos, M.A., Kaufmann, M., Montecchiani, F.: Guest editors' foreword and
  overview for the special issue on graph drawing beyond planarity. J. Graph
  Algorithms Appl.  \textbf{22}(1),  1--10 (2018)

\bibitem{BLPS13}
Biedl, T., Lubiw, A., Petrick, M., Spriggs, M.J.: Morphing orthogonal planar
  graph drawings. ACM Trans. Algorithms  \textbf{9}(4), ~29 (2013)

\bibitem{Biedl2017}
Biedl, T., Liotta, G., Montecchiani, F.: Embedding-preserving rectangle
  visibility representations of nonplanar graphs. Discrete Comput. Geom.
  (2017). \doi{https://doi.org/10.1007/s00454-017-9939-y}

\bibitem{DBLP:journals/comgeo/Brandenburg18}
Brandenburg, F.J.: T-shape visibility representations of 1-planar graphs.
  Comput. Geom.  \textbf{69},  16--30 (2018)

\bibitem{DBLP:conf/soda/CohenMSV17}
Cohen, M.B., Madry, A., Sankowski, P., Vladu, A.: Negative-weight shortest
  paths and unit capacity minimum cost flow in {\~{o}}
  (\emph{m}\({}^{\mbox{10/7}}\) log \emph{W}) time. In: {ACM-SIAM} {SODA} 2017.
  pp. 752--771. {SIAM} (2017)

\bibitem{DBLP:journals/dam/DeanH97}
Dean, A.M., Hutchinson, J.P.: Rectangle-visibility representations of bipartite
  graphs. Discrete Appl. Math.  \textbf{75}(1),  9--25 (1997)

\bibitem{book}
{Di Battista}, G., Eades, P., Tamassia, R., Tollis, I.G.: Graph Drawing:
  Algorithms for the Visualization of Graphs. Prentice-Hall (1999)

\bibitem{DBLP:journals/algorithmica/GiacomoDELMMW18}
{Di Giacomo}, E., Didimo, W., Evans, W.S., Liotta, G., Meijer, H.,
  Montecchiani, F., Wismath, S.K.: Ortho-polygon visibility representations of
  embedded graphs. Algorithmica  \textbf{80}(8),  2345--2383 (2018)

\bibitem{DBLP:journals/corr/abs-1804-07257}
Didimo, W., Liotta, G., Montecchiani, F.: A survey on graph drawing beyond
  planarity. CoRR  \textbf{abs/1804.07257} (2018),
  \url{http://arxiv.org/abs/1804.07257}

\bibitem{dlt-pepg-85}
Dolev, D., Leighton, F.T., Trickey, H.: Planar embedding of planar graphs. In:
  Preparata, F.P. (ed.) {VLSI} Theory, Advances in Compututing Research,
  vol.~2, pp. 147--161. JAI Press, Greenwich, CT (1985)

\bibitem{DBLP:journals/tcs/EvansLM16}
Evans, W.S., Liotta, G., Montecchiani, F.: Simultaneous visibility
  representations of plane $st$-graphs using {L}-shapes. Theor. Comput. Sci.
  \textbf{645},  100--111 (2016)

\bibitem{Garg1997}
Garg, A., Tamassia, R.: A new minimum cost flow algorithm with applications to
  graph drawing. In: North, S.C. (ed.) {GD} 1996. LNCS, vol.~1190, pp.
  201--216. Springer (1996)

\bibitem{hall}
Hall, P.: On representatives of subsets. J. London Math. Soc.
  \textbf{s1-10}(1),  26--30

\bibitem{DBLP:journals/comgeo/HutchinsonSV99}
Hutchinson, J.P., Shermer, T.C., Vince, A.: On representations of some
  thickness-two graphs. Comput. Geom.  \textbf{13}(3),  161--171 (1999)

\bibitem{DBLP:journals/csr/KobourovLM17}
Kobourov, S.G., Liotta, G., Montecchiani, F.: An annotated bibliography on
  1-planarity. Computer Science Review  \textbf{25},  49--67 (2017)

\bibitem{DBLP:journals/ipl/LiottaM16}
Liotta, G., Montecchiani, F.: L-visibility drawings of ic-planar graphs. Inf.
  Process. Lett.  \textbf{116}(3),  217--222 (2016)

\bibitem{DBLP:conf/cccg/Shermer96}
Shermer, T.C.: On rectangle visibility graphs. {III.} {E}xternal visibility and
  complexity. In: {CCCG} 1996. pp. 234--239. Carleton University Press (1996)

\bibitem{DBLP:conf/stacs/StreinuW03}
Streinu, I., Whitesides, S.: Rectangle visibility graphs: Characterization,
  construction, and compaction. In: Alt, H., Habib, M. (eds.) {STACS} 2003.
  LNCS, vol.~2607, pp. 26--37. Springer (2003)

\end{thebibliography}
\bibliographystyle{splncs04}

\clearpage

\appendix
\newpage
\appendix
%\section*{\LARGE Appendix}\label{se:appendix}\vspace{1em}
\makeatletter
\noindent
\rlap{\color[rgb]{0.51,0.50,0.52}\vrule\@width\textwidth\@height1\p@}%
\hspace*{7mm}\fboxsep1.5mm\colorbox[rgb]{1,1,1}{\raisebox{-0.4ex}{%
		\large\selectfont\bfseries Appendix}}%
\makeatother

\section{Additional Material for Section~\ref{se:lower}}\label{ap:lower}
\begin{figure}[h]
\centering
\includegraphics[width=\textwidth,page=4]{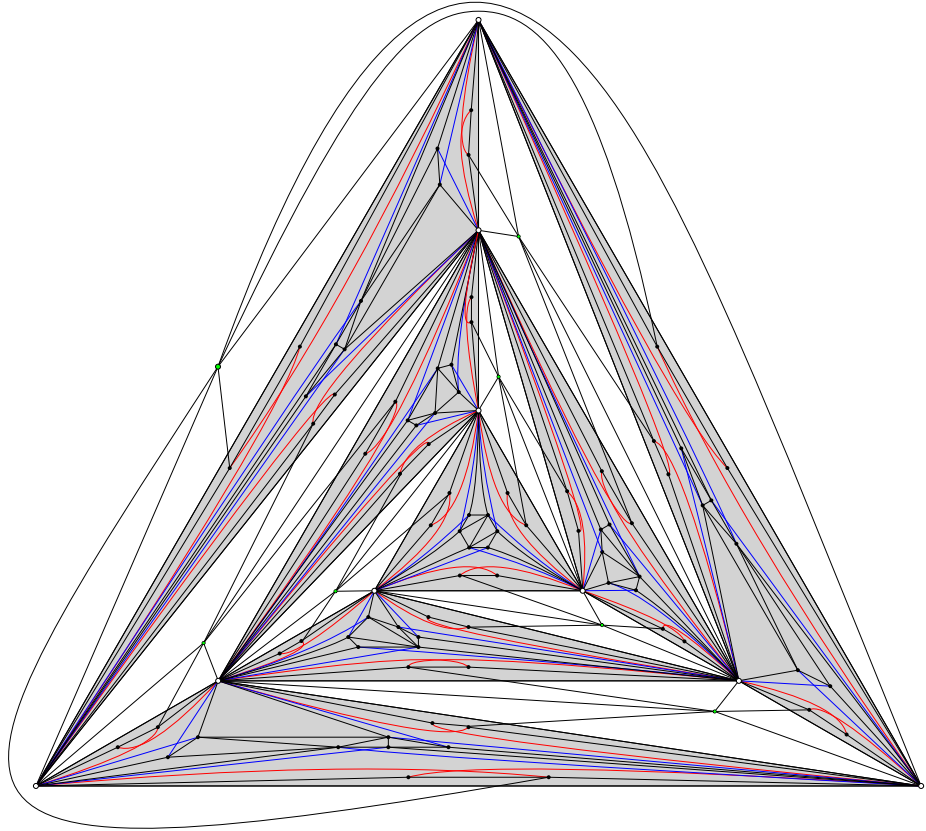}
\caption{A complete illustration for the graph $G(3 \cdot 3)$. The edges forming a T-configuration are blue, while those forming a B-configuration are red.\label{fi:g9}}
\end{figure}

\clearpage

\section{Additional Material for Section~\ref{sse:matching}}

\good*

\begin{proof}
We first observe that it suffices to prove the statement when $G'$ coincides with $G$, because if the five properties hold for $G$, they also hold for any subgraph of $G$.

We start with definitions that are used to prove a claim.  Four edge fragments $(u,p)$, $(p,z)$, $(z,q)$, $(q,u)$ of $G$ (where $u,z$ are vertices and $p,q$ are crossings) are called a \emph{separating curve} of $G$, denoted by $C(u,p,z,q)$. A vertex $w$ is \emph{inside} (resp., \emph{outside}) $C(u,p,z,q)$ if it lies inside (resp., outside) the region of the plane bounded by $C(u,p,q,z)$.

\begin{myclaim}\label{cl:sep-curve}
Let $G$ be a $3$-connected $1$-plane graph, and let $C(u,p,z,q)$ be a separating curve of $G$.
Then all vertices of $G$, except $u$ and $z$, are either all inside of $C$ or all outside of $C$.
\end{myclaim}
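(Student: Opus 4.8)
The plan is to argue by contradiction, combining the $3$-connectivity of $G$ with the observation that $C=C(u,p,z,q)$ is made up exclusively of fragments of edges that are \emph{already} crossed (at $p$ or at $q$), so that by $1$-planarity no further edge of $G$ can cross $C$. Hence $\{u,z\}$ will turn out to be the only ``gateway'' between the inside and the outside of $C$, which contradicts $3$-connectivity unless one of the two sides contains no vertex other than $u,z$.

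First I would check that $C$ is a simple closed (Jordan) curve, so that \emph{inside} and \emph{outside} are well defined. The fragments $(u,p)$ and $(p,z)$ lie on the two distinct edges $e_1,e_2$ forming the crossing $p$, and $(z,q)$, $(q,u)$ lie on the two distinct edges $e_3,e_4$ forming the crossing $q$. The four edges $e_1,e_2,e_3,e_4$ are pairwise distinct, and each has already used up its unique permitted crossing, so no two of them cross each other; consequently the four fragments meet only at the common endpoints $u,p,z,q$, and $C$ is a simple closed curve. Moreover, since an edge of a proper drawing passes through no vertex, and $p,q$ are crossing points rather than vertices, the only vertices of $G$ lying on $C$ are $u$ and $z$; every other vertex is strictly inside or strictly outside $C$.

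The key step is the claim that every path of $G$ joining a vertex strictly inside $C$ to a vertex strictly outside $C$ must pass through $u$ or through $z$. Such a path, seen as a curve, must meet the Jordan curve $C$ in some point $x$. If $x$ were an interior point of a fragment, say of $(u,p)\subseteq e_1$, then the path edge through $x$ would cross $e_1$ at the point $x\neq p$, giving $e_1$ a second crossing, which is impossible in a $1$-plane graph. If $x$ were the crossing point $p$ (or $q$), then the path edge through $x$ would create a forbidden triple point with $e_1$ and $e_2$. Hence $x\in\{u,z\}$; and since a path can pass through the point representing a vertex only by actually visiting that vertex (edges touch vertices only at their endpoints), the path uses $u$ or $z$. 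Now suppose, for contradiction, that some vertex $a\notin\{u,z\}$ is strictly inside $C$ and some vertex $b\notin\{u,z\}$ is strictly outside. By $3$-connectivity and Menger's theorem there are three internally vertex-disjoint $a$–$b$ paths; by the claim each of them visits $u$ or $z$ as an \emph{internal} vertex (internal because $u,z\notin\{a,b\}$), so by pigeonhole two of the paths share one of $u,z$ — a contradiction (equivalently, $\{u,z\}$ would be a $2$-cut). Therefore all vertices of $G$ other than $u,z$ lie on one side of $C$.

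The part requiring the most care is the topological bookkeeping in the key step: ruling out all the ways a path could sneak across $C$ without touching $u$ or $z$ — crossing a fragment transversally, passing through the crossing points $p$ or $q$, or being tangent to a fragment — each of which must be excluded by invoking the properness of the drawing together with the fact that $e_1,e_2,e_3,e_4$ have already exhausted their single allowed crossing under $1$-planarity.
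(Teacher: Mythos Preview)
Your proposal is correct and follows essentially the same approach as the paper: assume vertices exist on both sides of $C$, observe that every path between them must go through $u$ or $z$, and conclude that $\{u,z\}$ is a separation pair, contradicting $3$-connectivity. The paper's proof is a terse two-liner that simply asserts the key step (``any path from $w$ to $w'$ contains at least one of $u$ and $z$''), whereas you carefully justify it via the Jordan curve property of $C$ and the fact that each of the four underlying edges has already spent its unique allowed crossing; your added rigor is welcome, and the Menger formulation is equivalent to the paper's direct separation-pair argument.
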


The proof of the claim is as follows. Suppose, for a contradiction, that there is a vertex $w$ inside $C(u,p,z,q)$ and a vertex $w'$ outside it, as in Fig.~\ref{fi:separating-curve}.
Then any path from $w$ to $w'$ contains at least one of $u$ and $z$, which implies that $u,z$ is a separation pair of $G$, a contradiction with the fact that $G$ is $3$-connected. This concludes the proof of the claim.

\begin{figure}[t]
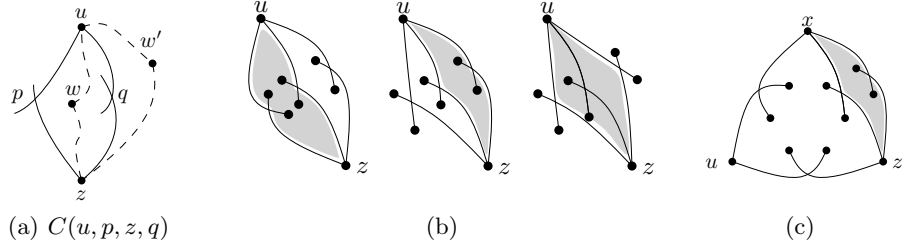

	\centering
	\subfigure[$C(u,p,z,q)$]{\includegraphics[width=0.22\columnwidth,page=4]{figures/configurations}\label{fi:separating-curve}}\hfill
	\subfigure[]{\includegraphics[width=0.51\columnwidth,page=6]{figures/configurations}\label{fi:nothreesamepoles}}\hfill
	\subfigure[]{\includegraphics[width=0.22\columnwidth,page=5]{figures/configurations}\label{fi:t-onecrossing}}
	\caption{Illustration for Lemma~\ref{le:good}: (a)~Claim~\ref{cl:sep-curve}, (b)~{\p\ref{pr:nothreesamepoles}}, and (c)~ {\p\ref{pr:t-onecrossing}}.}
\end{figure}

We are now ready to prove that each property of the statement is satisfied by $G$.

\noindent{\p\ref{pr:nothreesamepoles}}. Three independent forbidden configurations sharing a pair of poles $u,z$ would imply the existence of a separating curve having some vertices inside and some vertices outside, which contradicts Claim~\ref{cl:sep-curve}. Figure~\ref{fi:nothreesamepoles} illustrates three exhaustive cases based on the topology of the two outermost pairs of crossing edges (the interior of the separating curve has a light-gray background).

\noindent{\p\ref{pr:w-single}}. The external boundary of a W-configuration $w(u,z)$ is a separating curve, and hence the statement follows from Claim~\ref{cl:sep-curve} and from the fact that there exists at least one vertex inside $w(u,z)$ (any vertex of $w(u,z)$ distinct from $u$ and $z$).

\noindent{\p\ref{pr:w-poles}}. Similarly as in the proof of {\p\ref{pr:nothreesamepoles}}, a third pair of edge fragments adjacent to $u$ and $z$ would imply the existence of a   separating curve having some vertices inside and some vertices outside (see the central drawing of Fig.~\ref{fi:nothreesamepoles}).

\noindent{\p\ref{pr:b-samepoles}}. A B-configuration $b(u,z)$ that shares its poles with another B-configuration $b'(u,z)$ would imply either that $b(u,z)$ and $b'(u,z)$ are dependent or  that the union of the two edge fragments of the external boundary of $b(u,z)$ and of the two edge fragments of the external boundary of $b'(u,z)$ is a separating curve. The first case is not possible because there would be two parallel edges connecting $u$ and $z$ but $G$ is simple. The second case implies the existence of a W-configuration in $G$. On the other hand, $G$ contains at most one W-configuration by {\p\ref{pr:w-single}}.

\noindent{\p\ref{pr:t-onecrossing}}. Two distinct T-configurations can share at most two pairs of crossing edges. Recall that the external boundary of a T-configuration is formed by six edge fragments that connect three vertices and three crossings. Suppose for a contradiction that there exist two T-configurations, $t(u,x,z)$ and $t'(u,x,z)$, which share two pairs of crossing edges, i.e., their external boundaries share four edge fragments; see also Fig.~\ref{fi:t-onecrossing}. The pair of edge fragments of the external boundary of $t(u,x,z)$  not shared with $t'(u,x,z)$ and the pair of edge fragments of the external boundary of $t'(u,x,z)$  not shared with $t(u,x,z)$ form a separating curve (whose interior is light-gray in Fig.~\ref{fi:t-onecrossing}) that has some vertices inside and some vertices outside, a contradiction with Claim~\ref{cl:sep-curve}.
\qed\end{proof}

\section{Additional Material for Subsection~\ref{sse:drawing-algo}}\label{ap:drawing-algo}
\begin{figure}[h]


\centering
\subfigure[]{\includegraphics[width=0.33\columnwidth,page=1]{example-rvr}\label{fi:example-rvr-1}}\hfil
\subfigure[]{\includegraphics[width=0.33\columnwidth,page=3]{example-rvr}\label{fi:example-rvr-2}}\hfil
\subfigure[]{\includegraphics[width=0.33\columnwidth,page=4]{example-rvr}\label{fi:example-rvr-3}}
\caption{Illustration for \algorvr. (a) A $1$-plane graph $G$ with no forbidden configurations. (b) The maximal plane graph $G_t$ whose separating triangles are bold and whose surround edges are red. (c) The final RVR $\gamma$ of $G$. The orange vertices are introduced to remove T-configurations in a preprocessing step, which is not included in \algorvr and is explained in the proof of Lemma~\ref{le:upperboud}. \label{fi:algorvr}}
\end{figure}

\lemmaupperbound*
\begin{proof}
Let $G^*$ be a $3$-connected $1$-plane graph. Let $P$ be the set of poles of $G^*$, and let $F$ be a set of non-redundant forbidden configurations of $G^*$.
By Theorem~\ref{th:matching}, there exists a $5$-matching of $F$ into $P$.

Let $f$ be a forbidden configuration of $F$, and let $u$ be the pole of $P$ matched with $f$.
Let $(u,v)$ and $(w,z)$ be two crossing edges of $f$ such that $z$ is another pole of $f$, and denote by $k$ their crossing. Note that this pair of edges is unique if $f$ is a B-configuration, while if $f$ is a T-configuration there are two such pairs. Also, if $f$ is a W-configuration, by construction of $F$ we have that each of its two crossings is matched with one of its two poles. We subdivide the edge fragment $(k,u)$ with a \emph{subdivision vertex} $s$ and we add the uncrossed edges $(z,s)$ and $(s,w)$; see Fig.~\ref{fi:subdivision} for an illustration. Note that this operation removes $f$ from $G^*$ and it does not introduce any new forbidden configuration (in particular, no new edge crossings are introduced). Moreover, the resulting graph is still $3$-connected, because if $s$ was part of a separation pair with another vertex $y$, then also $u,y$ would be a separation pair, which is not possible because $G^*$ is $3$-connected. Let $G$ be the $3$-connected $1$-plane graph obtained by introducing a subdivision vertex for each forbidden configuration of $F$. Graph $G$ does not contain any forbidden configuration. Namely, all forbidden configurations of $F$ have been removed;  any T-configuration not in $F$ is such that there exists a B-configuration that is dependent of it, and, since this B-configuration has been removed, such a T-configuration also disappeared; any W-configuration (at most one) not in $F$ is such that there exist two B-configurations that are dependent of it, which have been both removed and hence also such a W-configuration disappeared.

From the argument above, it follows that $G$ admits an RVR $\gamma$. In particular, we aim at computing $\gamma$ such that every subdivision vertex introduced when going from $G^*$ to $G$ is incident to a face containing a surround edge of $G_t$. Recall that $G_t$ is a triangulated plane graph  and that dummy vertices have degree four (see the description of \algorvr in Section~\ref{se:upper} and~\cite{Biedl2017} for details), which implies that every dummy vertex is inside a $4$-cycle of uncrossed edges (called \emph{kite} in~\cite{Biedl2017}). To this aim, we first show that every forbidden configuration $f$ whose matched pole is denoted by $u$, whose subdivision vertex is denoted by $s$, and whose other pole adjacent to $s$ is denoted by $z$, is such that there exists a separating triangle $\Delta_f$ in $G_t$ having $u$ and $z$ as two of its vertices (for ease of description, we view the outer face of $G_t$ as a separating triangle). This is clearly the case if $f$ is a T-configuration, since its three poles form a separating triangle in $G_t$, as shown in Fig.~\ref{fi:sep-t}.  Also, if $f$ is a W-configuration, then the outer face of $G_t$ is formed by either the three vertices $u,s,z$, as shown in Fig.~\ref{fi:sep-w}, or by $u$ and $z$ together with the subdivision vertex $s'$ used to resolve the other crossing of $f$. Finally, if $f$ is a B-configuration, then the edge $(u,z)$ exists and must be crossed in $G$ (else $u,z$ would be a separation pair) and thus $u$ and $z$ form a separating triangle with the dummy vertex representing the crossing on $(u,z)$, as shown in Fig.~\ref{fi:sep-b}. We now claim that edge $(u,z)$ is a feasible surround edge for $\Delta_f$. Denote by $\mathcal C$ the $4$-connected component having $\Delta_f$ as outer face. Note that the inner face of $\mathcal C$ containing $(u,z)$ on its boundary consists of vertices $u,z,s$ (or $u,z,s'$ in the case of a W-configuration), which are all original with respect to $G_t$. If $\mathcal C$ does not have any parent component, then choosing $(u,z)$ as surround edge of $\mathcal C$ is clearly feasible. Else, let $\mathcal C'$ be the parent component of $\mathcal C$ and let $e'$ be its surround edge. If  $e'$ does not belong to $\Delta_f$ or belongs to $\Delta_f$ and coincides with $(u,z)$, again $(u,z)$ can be safely chosen as surround edge of $\mathcal C$. If $e'$ belongs to $\Delta_f$ and is not $(u,z)$, then the outer face of $\mathcal C'$ is a separating triangle, denoted by $\Delta_{f'}$, that may correspond or not to another forbidden configuration $f'$ of $F$. In the latter case, we can again choose $(u,z)$ as surround edge. In the former case,  $f'$ cannot be a B-configuration as otherwise it would not be non-redundant with respect to $f$ (i.e., $f$ would not be in $F$), nor a W-configuration since in this case we would have chosen $(u,z)$ as surround edge. Thus, $f'$ is a T-configuration sharing the two crossing edges $(u,v)$ and $(w,z)$ with $f$. In particular, observe that the addition of $s$ destroyed both $f'$ and $f$. As a consequence, we can ignore $f'$ (i.e., we can assume that it does not belong to $F'$) and choose $(u,z)$ as surround edge for both $\mathcal C$ and $\mathcal C'$.

We finally show how to turn $\gamma$ into an OPVR $\Gamma$ of $G^*$ with vertex complexity at most ten. Let $u$ be a pole of $G^*$ and let $r(u)$ be the rectangle representing it in $\gamma$. Observe that, since $u$ has at most five matched forbidden configurations, $u$ is adjacent to at most five subdivision vertices of $G$. On the other hand, in order to turn $\gamma$ into the desired OPVR, we need to replace all $u$'s visibilities towards subdivision vertices with visibilities towards the other endpoints of the subdivided edges. To this aim, we attach on a side of $r(u)$ a rectangle so that $r(u)$ becomes an orthogonal polygon with two reflex corners for each attached rectangle. Let $s$ be a subdivision vertex adjacent to $u$ and let $(u,v)$ be the edge subdivided by $s$. Also, let $(w,z)$ be the edge that crosses $(u,v)$. From the argument above we know that $(u,z)$ is a surround edge, and hence we know that all the visibilities incident to $u$ and $z$ are horizontal. It follows that we can remove $r(s)$ from $\gamma$, and attach to $r(u)$ a \emph{spoke}, i.e., a rectangle around the visibility $(u,s)$ (see the blue region in Fig.~\ref{fi:spunzoni-1}) so that the visibility between $r(s)$ and $r(v)$ is now attached to this spoke, as shown in Fig.~\ref{fi:spunzoni}. By repeating this procedure for all poles and for all their subdivision vertices we obtain the desired OPVR $\Gamma$ of $G$. In particular, since each pole $u$ is adjacent to at most five subdivision vertices in $G^*$, it follows that we attached to $r(u)$ at most five spokes, hence creating at most ten reflex corners along the boundary of $r(u)$.
\qed\end{proof}

\end{document}